 \theoremstyle{definition}
 \newtheorem{remark}{Remark}
 \newtheorem{proposition}{Proposition}
\begin{document}

\title{Bare Demo of IEEEtran.cls for\\ IEEE \textsc{Transactions on Magnetics}}


\title{Composite Signalling for DFRC: \\
Dedicated Probing Signal or Not?}

\author{Li Chen, Fan Liu,~\IEEEmembership{Member,~IEEE,} Jun Liu,~\IEEEmembership{Senior Member,~IEEE,} and Christos Masouros,~\IEEEmembership{Senior Member,~IEEE} 
\thanks{


L. Chen and J. Liu are with the Department of Electronic Engineering and Information Science, University of Science and Technology of China. 
(e-mail: \{chenli87, junliu\}@ustc.edu.cn).

F. Liu and C. Masouros are with the Department of Electronic and Electrical Engineering, University College London, London, WC1E 7JE, UK (e-mail: fan.liu@ucl.ac.uk, chris.masouros@ieee.org).

}}



\maketitle

\begin{abstract}
Dual-functional radar-communication (DFRC) is a promising new solution to simultaneously probe the radar target and transmit information in wireless networks. In this paper, we study the joint optimization of transmit and receive beamforming for the DFRC system. Specifically, the signal to interference plus noise ratio (SINR) of the radar is maximized under the SINR constraints of the communication user (CU), which characterizes the optimal tradeoff between radar and communication. 
In addition to simply using the  communication signal for target probing, we further consider to exploit dedicated probing signals to enhance the radar sensing performance.
We commence by studying the single-CU scenario, where a closed-form solution to the beamforming design problem is provided. It is then proved that a dedicated radar probing signal is not needed.
As a further step,  we consider a more complicated multi-CU scenario, where the beamforming design is formulated as a non-convex quadratically constrained quadratic programming. The optimal solutions are obtained by applying semidefinite relaxation with guaranteed rank-1 property. It is shown that under the multi-CU scenario, the dedicated probing signal should be employed to improve the radar performance at the cost of implementing an additional interference cancellation at the CU.
Finally, the numerical simulations are provided to verify the effectiveness of the proposed algorithm.
\end{abstract}

\begin{IEEEkeywords}
Spectrum sharing, radar-communication, signal to interference and noise ratio, probing signal, communication signal, joint beamforming.
\end{IEEEkeywords}

\section{Introduction}
Communication and radar spectrum sharing (CRSS) has recently drawn significant attention due to the scarcity of the commercial wireless spectrum. For instance, the millimeter wave (mmWave) band is occupied by variety of radars \cite{7786130}, and has also been assigned as a new licensed band to the 5G network \cite{7000981}. It is well-recognized that communication and radar signals have some common features in their waveforms. Although their purposes are dramatically different, it is feasible to use one type of signal for the other type’s purpose. Nevertheless, the use of radar (communication) signals for communication (radar) functionalities, introduces a number of challenges \cite{8828023,8352726,8737000}. To address these challenges, the research of dual-functional radar-communication (DFRC) is well-underway \cite{8828016,8642926,9005192}.

In general,  the aim of the DFRC is to implement both communication and radar functionalities on the same hardware platform. Based on information theory, the work in \cite{7279172} unified the radar and communication performance metric and discussed the performance bounds of the DFRC system. 
Furthermore, the weighted
sum of the estimation and communication rates was analyzed as the performance metrics in the DFRC system \cite{7855671}. 
By leveraging the simple time-division scheme,  the radar and the communication signals can be transmitted within different time slots, which avoids the mutual interference \cite{5483108}. To exploit the favorable time-frequency decoupling property of the orthogonal frequency division multiplexing (OFDM) waveform, the OFDM communication signal was adopted for target detection, where the range and Doppler processing are independent with each other \cite{5776640}.  
From a signal processing perspective, the implementation of mmWave DFRC systems was fully studied in \cite{8828030}. The unimodular signal design was discussed in \cite{9119137} for DFRC architecture, where the information of downlink communication was modulated via the ambiguity function (AF) sidelobe nulling in the prescribed range-Doppler cells.

Beamforming design is essential to improve the performance of the DFRC signal processing in the spatial domain, which has been widely studied in the literature. Aiming to realize the dual functionalities,  the  work  of  \cite{7347464}  designed  a  transmit  beampattern for multiple input multiple output (MIMO) radar with the communication information being embedded into the sidelobes of the radar beampattern. Considering both the separated and the shared antenna deployments, a series of optimization-based transmit beamforming approaches for the DFRC system were studied in \cite{8288677}, where the communication signal was exploited for target detection. By imposing the constraints of the radar waveform similarity and the constant modulus, the interference of the multiple communication users (CUs)  was suppressed to improve the communication performance in \cite{8386661}. 
Based on IEEE 802.11ad wireless local area network (WLAN) protocol, a joint waveform for automotive radar and a potential mmWave vehicular communication system were proposed in \cite{8114253}. The work of \cite{8309274} further studied the feasibility of an opportunistic radar, which exploited the probing signals transmitted during the sector level sweep of the IEEE 802.11ad beamforming training protocol.
In order to reduce the hardware complexity and the associated costs incurred in the mmWave massive MIMO system, a hybrid analog-digital beamforming structure was proposed for the DFRC transmission in \cite{8999605}. 

It is worth pointing out that all the aforementioned works on the DFRC beamforming design focused on formulating a desired transmit beampattern without considering the receive beamformimg. To the best of our knowledge, the joint optimization of the transmit and receive beamforming under the communication constraints has never been studied for the DFRC system before, despite the fact that the joint optimization of transmit and receive beamforming for the MIMO radar system has been extensively investigated in the recent literature \cite{4840496,6649991,7450660,7762192,8239836,9034082, 8826594}.
To detect an extended target, the joint optimization of waveforms and receiving filters in the MIMO radar was considered in \cite{4840496}. In order to guarantee  the constant modulus and similarity properties of the radar waveforms, numerous approaches were provided to maximize the signal to interference plus noise (SINR) of the radar, e.g., the sequential optimization algorithms (SOAs) in \cite{6649991}, the successive quadratically constrained quadratic programming (QCQP) refinement method in \cite{7450660}, the block coordinate descent (BCD) framework in \cite{7762192}, and the general majorization-minimization (MM) framework in \cite{8239836}. To detect multiple targets, the joint optimization of waveforms and receiving filters in the MIMO radar was further studied in \cite{9034082}. The problem of beampattern synthesis with sidelobe control was studied in \cite{8826594} using constant modulus weights.
The jointly design of the transmit and receive beamforming was provided in \cite{6656878} based on  \emph{a priori} information on the locations of target
and interferences.

All these works improved the receive SINR of the echo signal based on the prior information of the target and clutter. In contrast, for the DFRC system, 
there exist both probing signal and communication signal, which are coupled together with each other for
drastically different purposes. As a consequence, the known MIMO radar-only designs are inapplicable for the latter. 
In this paper, we study the joint optimization of transmit and receive beamforming for the DFRC system. Specifically, the SINR of the radar is maximized under the SINR constraints of the CUs. Depending on the component of the DFRC transmit signal, we consider both the non-dedicated probing signal case and the dedicated probing signal case. 
For the non-dedicated probing signal case, the DFRC transmit signal is only composed of the communication signal of the CUs. For the dedicated probing signal case, besides the communication signal, the dedicated probing signal is added to the DFRC transmit signal to improve the radar performance.
Under the single CU scenario,  closed-form solutions of the optimized beamforming are provided for both cases. And it can be proved that there is no need to employ dedicated probing signal for the single CU scenario. On top of that, we consider a more complicated scenario with multiple CUs. For the non-dedicated probing signal case, the beamforming design is formulated as a non-convex quadratically constrained quadratic programming (QCQP), and we show that the globally optimal solution can be obtained by applying semidefinite relaxation (SDR) with rank-1 property. For the dedicated probing signal case, the rank-1 property after applying SDR can be also proved, and the corresponding optimal solution shows that the dedicated probing signal should be employed to improve the SINR of the radar receiver. The main contributions of this paper can be summarized as follows.

\begin{itemize}
    \item \textbf{Transmit-receive DFRC beamforming}:  We provide the solutions  of the joint transmit-receive beamforming optimization to maximize the SINR of the radar under the SINR constraints of the CUs. For the single CU scenario, the closed-form solutions of the optimized beamforming are derived. For the multiple CUs scenario, the optimal solutions are obtained by applying SDR with rank-1 property.
    
    \item \textbf{Dual-functional performance tradoff}: The optimal performance tradeoff between the radar and the communication is characterized in terms  of SINR. Compared to the time-sharing scheme between the radar and the communication, the joint optimization of transmit and receive beamforming yields a more favorable tradeoff performance. 
    
    \item \textbf{Dedicated radar probing signal or not}: Compared to the existing works only exploiting communication signals for radar functionality, we consider the use of a dedicated radar probing signal to improve the radar performance. For the single CU scenario, our analysis shows that there is no need of dedicated probing signal. For the multi-CU scenario, on the other hand, it is beneficial  to  employ  the  dedicated  probing  signal.
    
\end{itemize}

The remainder of the paper is organized as follows. Section \uppercase\expandafter{\romannumeral2} presents the system model. Section \uppercase\expandafter{\romannumeral3} studies a simplified single CU scenario. The study is further extended to a more complicated multiple CUs scenario in Section \uppercase\expandafter{\romannumeral4}. 
Simulation results are provided in Section \uppercase\expandafter{\romannumeral5}, followed by concluding remarks in Section \uppercase\expandafter{\romannumeral6}.

\textbf{Notation}: We use boldface lowercase letter to denote column vectors, and boldface uppercase letters to denote matrices. Superscripts $(\cdot)^H$ and $(\cdot)^T$ stand for Hermitian transpose and transpose, respectively.
$\text{tr}(\cdot)$ and $\text{rank}(\cdot)$ represent the trace operation and the rank operator, respectively.  $\mathcal{C}^{m \times n}$ is the set of complex-valued $m \times n$ matrices.
$x \sim \mathcal{CN}\left( {a,b} \right)$ means that $x$ obeys a complex Gaussian distribution with mean $a$ and covariance $b$.
$\text{E}(\cdot)$ denotes the statistical expectation. $\|\bf{x}\|$ denotes the Euclidean norm of a complex vector $\bf{x}$.

\section{System Model}
As illustrated in Fig. \ref{system}, we consider a DFRC MIMO system, which simultaneously probes the radar target and transmits information to the CUs. To be specific, it is composed of a DFRC base station (BS) with $N_t$ transmit antennas and $N_r$ receive antennas, $K$ single-antenna CUs indexed by $k \in \left\{ {1, \cdots ,K} \right\}$.

\begin{figure}[t]
\centering
  \includegraphics[width=0.5\textwidth]{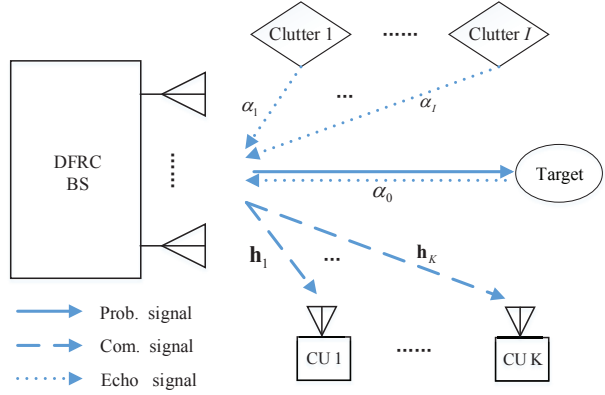}
\caption{System model of a DFRC MIMO system}
\label{system}
\end{figure}

Suppose there is a target and $I$  signal-dependent interference sources indexed by $i \in \left\{ {1, \cdots ,I} \right\}$. The target is located at angle $\theta_0$ and the interference sources are located at angle $\theta_i, i \in \{1,\cdots,I\}$. Given the transmit signal ${\bf{x}} \in {\mathcal{C}^{{N_t} \times 1}}$, the received signal of the radar receiver is

\begin{equation}
\begin{aligned}
 {{\bf{y}}_0} &= {\alpha _0}{{\bf{a}}_r}\left( {{\theta _0}} \right){\bf{a}}_t^T\left( {{\theta _0}} \right){\bf{x}} + \sum\limits_{i = 1}^I {{\alpha _i}{{\bf{a}}_r}\left( {{\theta _i}} \right){\bf{a}}_t^T\left( {{\theta _i}} \right){\bf{x}}}  + {{\bf{z}}_0}\\
 &={\alpha _0}{\bf{A}}\left( {{\theta _0}} \right){\bf{x}} + \sum\limits_{i = 1}^I {{\alpha _i}{\bf{A}}\left( {{\theta _i}} \right){\bf{x}}}  + {{\bf{z}}_0},
 \end{aligned}
\end{equation}
where ${\alpha _0}$ and ${\alpha _i}$ are the complex amplitudes of the target and the $i$-th interference source, respectively,  ${\bf{a}}_t^{}\left( \theta  \right) = {[ {1,{e^{ - j2\pi {\Delta _t}\sin \theta }}, \cdots ,{e^{ - j2\pi \left( {{N_t} - 1} \right){\Delta _t}\sin \theta }}} ]^T}$ and ${\bf{a}}_r^{}\left( \theta  \right) = {[ {1,{e^{ - j2\pi {\Delta _r}\sin \theta }}, \cdots ,{e^{ - j2\pi \left( {{N_r} - 1} \right){\Delta _r}\sin \theta }}} ]^T}$ with
${\Delta _t}$ and ${\Delta _r}$
being the spacing between adjacent antennas normalized by the wavelength, respectively, and ${{\bf{z}}_0} \in {{\cal C}^{{N_r} \times 1}}$ is the additive white Gaussian noise (AWGN) with each element subjects to ${\cal C}{\cal N}\left( {0,1} \right)$. The symbol index is omitted for simplicity.
Then, the output of the radar receiver is

\begin{equation}
\label{eq2}
\begin{aligned}
    r &= {{\bf{w}}^H}{{\bf{y}}_0} \\
    &= {\alpha _0}{{\bf{w}}^H}{\bf{A}}\left( {{\theta _0}} \right){\bf{x}} + {{\bf{w}}^H}\sum\limits_{i \in {\cal I}}^{} {{\alpha _i}{\bf{A}}\left( {{\theta _i}} \right){\bf{x}}}  + {{\bf{w}}^H}{{\bf{z}}_0},
\end{aligned}
\end{equation}
where ${\bf{w}} \in {{\cal C}^{{N_r} \times 1}}$ is the receive beamforming vector for SINR maximization.

Further, given the transmit signal ${\bf{x}} \in {\mathcal{C}^{{N_t} \times 1}}$, the received signal of the CU $k$ is

\begin{equation}
\label{eq3}
{y_k} = {\bf{h}}_k^H{\bf{x}} + {z_k},
\end{equation}
where ${{\bf{h}}_k} \in {{\cal C}^{{N_t} \times 1}}$ is the multiple input single output (MISO) channel vector between the DFRC BS and the CU $k$, and ${z_k} \sim {\cal C}{\cal N}\left( {0,1} \right)$ is the AWGN of the CU $k$. 

In this paper, we consider two cases according to the component of the DFRC BS's transmit signal ${\bf{x}}$.

\begin{itemize}
\item \textbf{Case 1} (Non-dedicated probing signal): In this case, the transmit signal of the DFRC BS is only composed of the communication signals of the CUs. That is 
\begin{equation}
\label{eqa4}
{\bf{x}} = \sum\limits_{k = 1}^K {{{\bf{x}}_k}},    
\end{equation}
where ${{\bf{x}}_k}$ is the communication signals of the CU $k$ and the radar functionality is realized by the sum of the CUs' communication signal.

\item \textbf{Case 2} (Dedicated probing signal): In this case, the transmit signal of the DFRC BS is composed of both the communication signal of the CUs and the dedicated probing signal.
That is 
\begin{equation}
\label{eqa5}
{\bf{x}} = \sum\limits_{k = 1}^K {{{\bf{x}}_k}}  + {{\bf{x}}_0},     
\end{equation}
where ${{\bf{x}}_0}$ is the dedicated probing signal to enhance the radar performance.

\end{itemize}

In addition, we impose the following two assumptions in this paper. 1) For the radar function, the angles of the target $\theta_0$ and the interference $\left\{ {{\theta _i}} \right\}$  are assumed to be known to the DFRC BS. 2) For the communication function, the channel 
is assumed to be known to the DFRC BS, and the dedicated probing signal ${{\bf{x}}_0}$ is pseudo-random and assumed to be known in prior to the CUs.

\section{Single CU Scenario}
In this section, we consider a simplified scenario with single CU in the network. The beamforming design of the DFRC BS is discussed for the non-dedicated probing signal case, and the closed-form solution is provided. Then, the dedicated probing signal case is studied, and it can be proved that there is no need of dedicated probing signal with single CU.

\subsection{Non-dedicated Probing Signal Case}
For the non-dedicated probing signal case, the transmit signal of the DFRC BS in (\ref{eqa4})
with single CU can be rewritten as

\begin{equation}
{\bf{x}} = {\bf{u}}s,
\end{equation}
where ${\bf{u}} \in {{\cal C}^{{N_t} \times 1}}$ and $s \in {\cal C}{\cal N}\left( {0,1} \right)$ are the beamforming vector and the information symbol of the CU, respectively. 

According to the output in (\ref{eq2}), the SINR of the radar receiver can be expressed as

\begin{equation}
\label{eq5}
\begin{aligned}
\gamma _R^{\left( \text{I} \right)} &= \frac{{{{\left| {{\alpha _0}{{\bf{w}}^H}{\bf{A}}\left( {{\theta _0}} \right){\bf{x}}} \right|}^2}}}{{{\rm{E}}\left[ {{{\left| {{{\bf{w}}^H}\sum\limits_{i=1}^{I} {{\alpha _i}{\bf{A}}\left( {{\theta _i}} \right){\bf{x}}} } \right|}^2}} \right] + {{\bf{w}}^H}{\bf{w}}}} \\
&= \frac{{{{\left| {{\alpha _0}} \right|}^2}{{\left| {{{\bf{w}}^H}{\bf{A}}\left( {{\theta _0}} \right){\bf{x}}} \right|}^2}}}{{{{\bf{w}}^H}\left[ {\sum\limits_{i = 1}^I {{{\left| {{\alpha _i}} \right|}^2}{\bf{A}}\left( {{\theta _i}} \right){\bf{uu}}_{}^H{{\bf{A}}^H}\left( {{\theta _i}} \right)}  + {\bf{I}}} \right]{\bf{w}}}}.
\end{aligned}
\end{equation}
And the output SINR of the radar receiver depends on the choice of the receive beamforming vector ${\bf{w}}$. The design of ${\bf{w}}$ can be expressed as

\begin{equation}
 \begin{array}{*{20}{c}}
{\mathop {\max }\limits_{\bf{w}} }&{\dfrac{{{{\left| {{{\bf{w}}^H}{\bf{A}}\left( {{\theta _0}} \right){\bf{x}}} \right|}^2}}}{{{{\bf{w}}^H}\left[ {\sum\limits_{i = 1}^I {{{\left| {{\alpha _i}} \right|}^2}{\bf{A}}\left( {{\theta _i}} \right){\bf{uu}}_{}^H{{\bf{A}}^H}\left( {{\theta _i}} \right)}  + {\bf{I}}} \right]{\bf{w}}}}}
\end{array},
\end{equation}
which is equivalent to the well-know minimum variance distortionless response (MVDR) problem, and its solution can be given by \cite{4840496}

\begin{equation}
\label{eq7}
{{\bf{w}}^*} = \frac{{{{\bf{\Sigma }}_1}{{\left( {\bf{u}} \right)}^{ - 1}}{\bf{A}}\left( {{\theta _0}} \right){\bf{x}}}}{{{\bf{x}}_{}^H{{\bf{A}}^H}\left( {{\theta _0}} \right){{\bf{\Sigma }}_1}{{\left( {\bf{u}} \right)}^{ - 1}}{\bf{A}}\left( {{\theta _0}} \right){\bf{x}}}},
\end{equation}
where 

\begin{equation}
{\bf{\Sigma }}_1\left( {\bf{u}} \right) = \left[ {\sum\limits_{i = 1}^I {{{\left| {{\alpha _i}} \right|}^2}{\bf{A}}\left( {{\theta _i}} \right){\bf{uu}}_{}^H{{\bf{A}}^H}\left( {{\theta _i}} \right)}  + {\bf{I}}} \right].
\end{equation}

Substituting (\ref{eq7}) into (\ref{eq5}), the SINR of the radar receiver can be calculated as

\begin{equation}
\gamma _R^{\left( \text{I} \right)} = {\bf{x}}_{}^H{\bf{\Phi }}_1\left( {\bf{u}} \right){\bf{x}},
\end{equation}
where 

\begin{equation}
\label{eq10}
{\bf{\Phi }}_1\left( {\bf{u}} \right) = {\left| {{\alpha _0}} \right|^2}{{\bf{A}}^H}\left( {{\theta _0}} \right){\bf{\Sigma }}_1{\left( {\bf{u}} \right)^{ - 1}}{\bf{A}}\left( {{\theta _0}} \right).
\end{equation}
And the average SINR of the radar receiver can be given by

\begin{equation}
\label{eq11}
\bar \gamma _R^{\left( \text{I} \right)} = {\rm{E}}\left[ {{\bf{x}}_{}^H{\bf{\Phi }}_1\left( {\bf{u}} \right){\bf{x}}} \right] = {\bf{u}}_{}^H{\bf{\Phi }}_1\left( {\bf{u}} \right){\bf{u}}.
\end{equation}

For the CU $k$, the received signal in (\ref{eq3}) can be rewritten as

\begin{equation}
{y_k} = {{\bf{h}}^H}{\bf{u}}s + {z_k},
\end{equation}
where ${\bf{h}} \in {{\cal C}^{{N_t} \times 1}}$ is the MISO channel vector between the DFRC BS and the CU. And
the average SINR of the CU can be calculated as

\begin{equation}
\label{eq12}
{\bar \gamma _C} = {\left| {{{\bf{h}}^H}{\bf{u}}} \right|^2},
\end{equation}
Then, we consider the beamforming optimization problem that maximizes the SINR of the radar receiver and satisfies the SINR of the CU, i.e.,

\begin{equation}
\left( {{\rm{P1}}{\rm{.1}}} \right)\begin{array}{*{20}{l}}
{\mathop {\max }\limits_{\bf{u}} }&{\bar\gamma _R^{\left( \text{I} \right)} = {\bf{u}}_{}^H{\bf{\Phi }}_1\left( {\bf{u}} \right){\bf{u}}}\\[3mm]
{{\rm{s}}{\rm{.t}}{\rm{.}}}&{\bar\gamma _C= {\left| {{{\bf{h}}^H}{\bf{u}}} \right|^2} \Gamma }\\[1.5mm]
{}&{{\bf{u}}_{}^H{\bf{u}} \le {P_0}}
\end{array},
\end{equation}
where $\Gamma$ is the threshold of the CU’s SINR, and $P_0$ is the transmit power constraint of the DFRC BS. 

Because $\bar\gamma _R^{\left( \text{I} \right)}$ is a nonlinear function of the transmit beamforming vector $\bf{u}$, Problem (P1.1) is generally non-convex. Thus, we adopt the sequential optimization to find the transmit beamforming vector ${\bf{u}}$  in an iterative fashion. Specifically, at the $m$-th iteration, we first compute ${{\bf{\Phi }}_0} = {\bf{\Phi }}_1\left[ {{\bf{u}}_{}^{\left( {m - 1} \right)}} \right]$ , where ${\bf{u}}_{}^{\left( {m - 1} \right)}$  is obtained in the $(m-1)$-th iteration. Thus, Problem (P1.1) can be rewritten as 

\begin{equation}
\left( {{\rm{P1}}{\rm{.2}}} \right)\begin{array}{*{20}{l}}
{\mathop {\max }\limits_{\bf{u}} }&{\bar\gamma _R^{\left( \text{I} \right)} = {\bf{u}}_{}^H{\bf{\Phi }_0}{\bf{u}}}\\[3mm]
{{\rm{s}}{\rm{.t}}{\rm{.}}}&{\bar\gamma _C^{}= {\left| {{{\bf{h}}^H}{\bf{u}}} \right|^2} \ge \Gamma }\\[1.5mm]
{}&{{\bf{u}}_{}^H{\bf{u}} \le {P_0}}
\end{array}.
\end{equation}
And the closed-form solution of Problem (P1.2) can be given by the following proposition. 

\begin{proposition} (Optimal beamforming with single CU) 
For the non-dedicated probing signal case,  the optimal beamforming vector of the DFRC BS with single CU i.e., the optimal solution to Problem (P1.2), can be given by

\begin{equation}
\label{eq15}
{\bf{u}}_{}^* = \left\{ {\begin{array}{*{20}{l}}
{\sqrt {{P_0}} {\bf{\hat g}},}&{\Gamma  \le {P_0}{{\left| {{{\bf{h}}^H}{\bf{\hat g}}} \right|}^2}}\\[1.5mm]
{\left( {\alpha {\bf{\hat h}} + \beta {{{\bf{\hat g}}}_ \bot }} \right),}&{{P_0}{\left\| {\bf{h}} \right\|^2} \ge \Gamma  > {P_0}{{\left| {{{\bf{h}}^H}{\bf{\hat g}}} \right|}^2}}
\end{array}} \right.,
\end{equation}
\begin{equation}
\label{eq16}
\alpha  = \sqrt {\frac{\Gamma }{{{\left\| {\bf{h}} \right\|^2}}}} \frac{{{\alpha _g}}}{{\left| {{\alpha _g}} \right|}},
\end{equation}
\begin{equation}
\label{eq17}
\beta  = \sqrt {{P_0} - \frac{\Gamma }{{{\left\| {\bf{h}} \right\|^2}}}} \frac{{{\beta _g}}}{{\left| {{\beta _g}} \right|}},
\end{equation}
where ${\bf{g}}$ is the dominant eigenvector of  ${{\bf{\Phi }}_0}$, ${\bf{\hat g}} = {{\bf{g}} \mathord{\left/
 {\vphantom {{\bf{g}} \left\| {\bf{g}} \right\|}} \right.
 \kern-\nulldelimiterspace} \left\| {\bf{g}} \right\|}$, ${\bf{\hat h}} = {{{{\bf{h}}}} \mathord{\left/
 {\vphantom {{{{\bf{h}}}} {\left\| {\bf{h}} \right\|}}} \right.
 \kern-\nulldelimiterspace} {\left\| {\bf{h}} \right\|}}$, 
${{\bf{g}}_ \bot } = {\bf{g}} - ( {{{{\bf{\hat h}}}^H}{\bf{g}}} ){\bf{\hat h}}$ denoting the projection of ${\bf{g}}$  into the null space of  ${\bf{\hat h}}$, ${{\bf{\hat g}}_ \bot } = {{{\bf{g}}_ \bot } \mathord{\left/
 {\vphantom {{{\bf{g}}_ \bot } {\left\| {{\bf{g}}_ \bot ^{}} \right\|}}} \right.
 \kern-\nulldelimiterspace} {\left\| {{\bf{g}}_ \bot ^{}} \right\|}}$ and ${\bf{ g}}$  can be expressed as ${\bf{ g}} = {\alpha _g}{\bf{\hat h}} + {\beta _g}{{\bf{\hat g}}_ \bot }$.

\end{proposition}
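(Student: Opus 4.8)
The plan is to exploit the special structure of $\Phi_0$. Since $\Phi_0 = \Phi_1(\mathbf{u}^{(m-1)})$ is built from $\mathbf{A}(\theta_0) = \mathbf{a}_r(\theta_0)\mathbf{a}_t^T(\theta_0)$, which has rank one, the matrix $\Phi_0 = |\alpha_0|^2 \mathbf{A}^H(\theta_0)\Sigma_1^{-1}\mathbf{A}(\theta_0)$ is a positive semidefinite rank-one matrix, so it can be written as $\Phi_0 = \lambda\hat{\mathbf{g}}\hat{\mathbf{g}}^H$ with $\lambda>0$ its unique nonzero eigenvalue and $\hat{\mathbf{g}}$ its normalized dominant eigenvector. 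The objective then collapses to $\mathbf{u}^H\Phi_0\mathbf{u}=\lambda|\hat{\mathbf{g}}^H\mathbf{u}|^2$, so the whole problem reduces to maximizing $|\hat{\mathbf{g}}^H\mathbf{u}|$ under the communication and power constraints.

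First I would argue that the power constraint is always tight. If $\|\mathbf{u}\|^2<P_0$ at an optimizer, scaling $\mathbf{u}\mapsto t\mathbf{u}$ with $t>1$ keeps the SINR constraint satisfied (it scales $|\mathbf{h}^H\mathbf{u}|^2$ upward) while strictly increasing the objective, a contradiction. Hence one may restrict to the sphere $\|\mathbf{u}\|^2=P_0$. Next I would introduce the orthogonal decomposition $\mathbf{u}=a\hat{\mathbf{h}}+\mathbf{u}_\perp$ with $\mathbf{u}_\perp\perp\hat{\mathbf{h}}$. The CU constraint becomes $|\mathbf{h}^H\mathbf{u}|^2=|a|^2\|\mathbf{h}\|^2\ge\Gamma$ and the power budget becomes $|a|^2+\|\mathbf{u}_\perp\|^2=P_0$, so that $p:=|a|^2$ ranges over the interval $[\Gamma/\|\mathbf{h}\|^2,\,P_0]$, which is nonempty precisely under the feasibility hypothesis $P_0\|\mathbf{h}\|^2\ge\Gamma$.

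Writing $\mathbf{g}=\alpha_g\hat{\mathbf{h}}+\beta_g\hat{\mathbf{g}}_\perp$ and using $\hat{\mathbf{h}}\perp\mathbf{u}_\perp,\hat{\mathbf{g}}_\perp$, I obtain $\mathbf{g}^H\mathbf{u}=\overline{\alpha_g}\,a+\overline{\beta_g}(\hat{\mathbf{g}}_\perp^H\mathbf{u}_\perp)$; by the triangle inequality together with Cauchy--Schwarz, $|\mathbf{g}^H\mathbf{u}|\le|\alpha_g|\,|a|+|\beta_g|\,\|\mathbf{u}_\perp\|$, with equality attained by aligning the two phases and setting $\mathbf{u}_\perp=\|\mathbf{u}_\perp\|\hat{\mathbf{g}}_\perp$ (up to phase). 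This alignment is exactly what produces the factors $\alpha_g/|\alpha_g|$ and $\beta_g/|\beta_g|$ appearing in (\ref{eq16})--(\ref{eq17}). Consequently the objective is maximized by maximizing the scalar function $F(p)=|\alpha_g|\sqrt{p}+|\beta_g|\sqrt{P_0-p}$ over $p\in[\Gamma/\|\mathbf{h}\|^2,\,P_0]$.

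Finally, $F$ is concave with interior stationary point $p^\star=P_0|\alpha_g|^2/(|\alpha_g|^2+|\beta_g|^2)=P_0|\mathbf{h}^H\hat{\mathbf{g}}|^2/\|\mathbf{h}\|^2$, and I would split on where $p^\star$ sits relative to the feasible interval. If $\Gamma\le P_0|\mathbf{h}^H\hat{\mathbf{g}}|^2$, then $p^\star\ge\Gamma/\|\mathbf{h}\|^2$ is interior, the maximizer is $p=p^\star$, and the aligned vector collapses to $\mathbf{u}\propto\mathbf{g}$, giving $\mathbf{u}^*=\sqrt{P_0}\hat{\mathbf{g}}$. Otherwise $p^\star$ lies to the left of the interval, so $F$ is decreasing on it, the maximizer is the left endpoint $p=\Gamma/\|\mathbf{h}\|^2$ (the SINR constraint is active), and substituting back yields $\mathbf{u}^*=\alpha\hat{\mathbf{h}}+\beta\hat{\mathbf{g}}_\perp$ with $|\alpha|^2=\Gamma/\|\mathbf{h}\|^2$ and $|\beta|^2=P_0-\Gamma/\|\mathbf{h}\|^2$, matching (\ref{eq15})--(\ref{eq17}). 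The main obstacle I anticipate is the bookkeeping of the complex phases in the decomposition, namely verifying that the triangle-inequality bound is simultaneously achievable under the fixed-norm constraints, which is what pins down the precise phase factors in (\ref{eq16})--(\ref{eq17}), together with confirming that both constraints are active in the second regime.
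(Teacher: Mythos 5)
Your argument is correct and arrives at the same two-regime solution, but it is a genuinely more complete proof than the one in Appendix A, which is essentially an assertion of the same case analysis: the paper drops the SINR constraint when it is inactive and quotes the dominant-eigenvector solution, and in the active regime it simply declares that the optimizer lies in $\mathrm{span}\{\hat{\bf h},\hat{\bf g}_\perp\}$ with power $\Gamma/\|{\bf h}\|^2$ on $\hat{\bf h}$ and the remainder on $\hat{\bf g}_\perp$, without justifying either claim. The step you supply that the paper omits is the crucial one: because ${\bf A}(\theta_0)={\bf a}_r(\theta_0){\bf a}_t^T(\theta_0)$ has rank one, ${\bf \Phi}_0=|\alpha_0|^2\bigl({\bf a}_r^H(\theta_0){\bf \Sigma}_1^{-1}{\bf a}_r(\theta_0)\bigr){\bf a}_t^*(\theta_0){\bf a}_t^T(\theta_0)$ is a rank-one positive semidefinite matrix, so the objective collapses to $\lambda|\hat{\bf g}^H{\bf u}|^2$ and the restriction to the two-dimensional subspace spanned by $\hat{\bf h}$ and ${\bf g}$ is provably without loss of optimality; for a general positive semidefinite ${\bf \Phi}_0$ of rank two or more this restriction fails (the KKT stationarity condition makes ${\bf u}^*$ a top eigenvector of ${\bf \Phi}_0+\lambda{\bf h}{\bf h}^H$, which generically leaves that subspace), so the proposition's form genuinely depends on this structural fact. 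Your reduction to the concave scalar problem $F(p)=|\alpha_g|\sqrt{p}+|\beta_g|\sqrt{P_0-p}$ on $p\in[\Gamma/\|{\bf h}\|^2,P_0]$ then derives, rather than assumes, both the case boundary $\Gamma=P_0|{\bf h}^H\hat{\bf g}|^2$ (where the stationary point $p^\star=P_0|{\bf h}^H\hat{\bf g}|^2/\|{\bf h}\|^2$ exits the feasible interval) and the simultaneous activity of both constraints in the second regime, and the phase-alignment step correctly accounts for the factors $\alpha_g/|\alpha_g|$ and $\beta_g/|\beta_g|$. The only loose ends are degenerate cases shared with the paper: when $\beta_g=0$ the second regime is empty so $\hat{\bf g}_\perp$ is never needed, and when $\alpha_g=0$ the phase of $\alpha$ is arbitrary; neither affects the conclusion.
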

\begin{proof}
The proof is given in Appendix \ref{proof1}.
\end{proof}

\subsection{Dedicated Probing Signal Case}
For the dedicated probing signal case, the transmit signal of the DFRC BS in (\ref{eqa5}) with single CU can be rewritten as

\begin{equation}
{\bf{x}} = {\bf{u}}s + {\bf{v}}{s_0},
\end{equation}
where ${\bf{v}} \in {{\cal C}^{{N_t} \times 1}}$ and ${s_0} \sim {\cal C}{\cal N}\left( {0,1} \right)$ are the beamforming vector and the symbol of the dedicated probing signal, respectively. In addition, $s$ and $s_0$ are independent and identically distributed (i.i.d.).

According to the output in (\ref{eq2}), the SINR of the radar receiver can be expressed as

\begin{equation}
\label{eq20}
\begin{aligned}
\gamma _R^{\left( \text{II} \right)} &= \frac{{{{\left| {{\alpha _0}{{\bf{w}}^H}{\bf{A}}\left( {{\theta _0}} \right){\bf{x}}} \right|}^2}}}{{{\rm{E}}\left[ {{{\left| {{{\bf{w}}^H}\sum\limits_{i = 1}^I {{\alpha _i}{\bf{A}}\left( {{\theta _i}} \right){\bf{x}}} } \right|}^2}} \right] + {{\bf{w}}^H}{\bf{w}}}} \\
&= \frac{{{{\left| {{\alpha _0}} \right|}^2}{{\left| {{{\bf{w}}^H}{\bf{A}}\left( {{\theta _0}} \right){\bf{x}}} \right|}^2}}}{{{{\bf{w}}^H}\left[ {\sum\limits_{i = 1}^I {{{\left| {{\alpha _i}} \right|}^2}{\bf{A}}\left( {{\theta _i}} \right)\left( {{\bf{uu}}_{}^H{\rm{ + }}{\bf{v}}{{\bf{v}}^H}} \right){{\bf{A}}^H}\left( {{\theta _i}} \right)}  + {\bf{I}}} \right]{\bf{w}}}}.
\end{aligned}
\end{equation}
By solving an equivalent MVDR problem, the corresponding receive beamforming vector to maximize the output SINR can be given by

\begin{equation}
\label{eq21}
{\bf{w}}^* = \frac{{{\bf{\Sigma }}_2{{\left( {{\bf{u}},{\bf{v}}} \right)}^{ - 1}}{\bf{A}}\left( {{\theta _0}} \right){\bf{x}}}}{{{{\bf{x}}^H}{{\bf{A}}^H}\left( {{\theta _0}} \right){\bf{\Sigma }}_2{{\left( {{\bf{u}},{\bf{v}}} \right)}^{ - 1}}{\bf{A}}\left( {{\theta _0}} \right){\bf{x}}}},
\end{equation}
where 

\begin{equation}
{\bf{\Sigma }}_2\left( {{\bf{u}},{\bf{v}}} \right) = \left[ {\sum\limits_{i = 1}^I {{{\left| {{\alpha _i}} \right|}^2}{\bf{A}}\left( {{\theta _i}} \right)\left( {{\bf{uu}}_{}^H{\rm{ + }}{\bf{v}}{{\bf{v}}^H}} \right){{\bf{A}}^H}\left( {{\theta _i}} \right)}  + {\bf{I}}} \right].
\end{equation}

Substituting (\ref{eq21}) into (\ref{eq20}), the SINR of the radar receiver can be calculated as

\begin{equation}
\gamma _R^{\left( \text{II} \right)} = {{\bf{x}}^H}{\bf{\Phi }}_2\left( {{\bf{u}},{\bf{v}}} \right){\bf{x}},
\end{equation}
where

\begin{equation}
{\bf{\Phi }}_2\left( {{\bf{u}},{\bf{v}}} \right) = {\left| {{\alpha _0}} \right|^2}{{\bf{A}}^H}\left( {{\theta _0}} \right){\bf{\Sigma }}_2{\left( {{\bf{u}},{\bf{v}}} \right)^{ - 1}}{\bf{A}}\left( {{\theta _0}} \right).
\end{equation}
And the average SINR of the radar receiver can be given by

\begin{equation}
\begin{aligned}
\bar\gamma _R^{\left( \text{II} \right)}\left( {{\bf{u}},{\bf{v}}} \right) &= {\rm{E}}\left[ {{{\bf{x}}^H}{\bf{\Phi }}_2\left( {{\bf{u}},{\bf{v}}} \right){\bf{x}}} \right]\\
&={\bf{u}}_{}^H{\bf{\Phi }}_2\left( {{\bf{u}},{\bf{v}}} \right){\bf{u}} + {{\bf{v}}^H}{\bf{\Phi }}_2\left( {{\bf{u}},{\bf{v}}} \right){\bf{v}}
\end{aligned}.
\end{equation}

For the CU, it has \emph{a priori} information of probing signal. After probing signal interference cancelling, its received SINR $\bar\gamma _C^{}\left( {\bf{u}} \right)$ can  also be expressed as (\ref{eq12}). 
The  beamforming  optimization  problem that  maximizes  the  SINR  of  the  radar  receiver  and  ensures the SINR requirement of the CU can be expressed as

\begin{equation}
\left( {{\rm{P2}}.1} \right)\begin{array}{*{20}{l}}
{\mathop {\max }\limits_{\bf{u}} }&{\bar \gamma _R^{\left( {{\rm{II}}} \right)} = {\bf{u}}_{}^H{{\bf{\Phi }}_2}\left( {{\bf{u}},{\bf{v}}} \right){\bf{u}} + {{\bf{v}}^H}{{\bf{\Phi }}_2}\left( {{\bf{u}},{\bf{v}}} \right){\bf{v}}}\\[3mm]
{{\rm{s}}.{\rm{t}}.}&{\bar \gamma _C^{} = {{\left| {{{\bf{h}}^H}{\bf{u}}} \right|}^2} \ge \Gamma }\\[1.5mm]
{}&{{{\bf{u}}^H}{\bf{u}} + {{\bf{v}}^H}{\bf{v}} \le {P_0}}
\end{array},    
\end{equation}

Similarly,  the sequential optimization can be adopted to find the transmit beamforming vector ${\bf{u}}$  and ${\bf{v}}$ in an iterative fashion, where we first compute ${{\bf{\Phi }}_0} = {\bf{\Phi }}_2\left[ {{\bf{u}}_{}^{\left( {m - 1} \right)}{\bf{v}}_{}^{\left( {m - 1} \right)}} \right]$  at the $m$-th iteration with  ${\bf{u}}_{}^{\left( {m - 1} \right)}$ and ${\bf{v}}_{}^{\left( {m - 1} \right)}$ being obtained from the $(m-1)$-th iteration. Thus, the beamforming optimization problem can be given by

\begin{equation}
\left( {{\rm{P2}}{\rm{.2}}} \right)\begin{array}{*{20}{l}}
{\mathop {\max }\limits_{\bf{u}} }&{\bar\gamma _R^{\left( \text{II} \right)} = {\bf{u}}_{}^H{{\bf{\Phi }}_0}{\bf{u}} + {{\bf{v}}^H}{{\bf{\Phi }}_0}{\bf{v}}}\\[3mm]
{{\rm{s}}{\rm{.t}}{\rm{.}}}&{\bar \gamma _C^{} = {\left| {{{\bf{h}}^H}{\bf{u}}} \right|^2} \ge \Gamma  }\\[1.5mm]
{}&{{{\bf{u}}^H}{\bf{u}} + {{\bf{v}}^H}{\bf{v}} \le {P_0}}
\end{array},
\end{equation}
And the closed-form solution of Problem (P2.2) can be given by the following proposition. 

\begin{proposition} (No need of dedicated probing signal with single CU)  For the dedicated probing signal case, the optimal beamforming vector of the dedicated probing signal with single CU, i.e., the optimal solution to Problem (P2.2), can be given by
${{\bf{v}}^*} = {\bf{0}}$. Thus, there is no need to design the dedicated probing signal with single CU, and the optimal beamforming vector of the communication signal ${{\bf{u}}^*}$ is also given by (\ref{eq15}). 
\end{proposition}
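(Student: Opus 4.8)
The plan is to exploit a structural fact that the preceding derivations leave implicit: the matrix ${\bf{\Phi}}_0$ appearing in (P2.2) is rank one. Indeed, since ${\bf{A}}(\theta_0) = {\bf{a}}_r(\theta_0){\bf{a}}_t^T(\theta_0)$ is an outer product, one has ${\bf{\Phi}}_0 = |\alpha_0|^2{\bf{A}}^H(\theta_0){\bf{\Sigma}}_2^{-1}{\bf{A}}(\theta_0) = \big(|\alpha_0|^2{\bf{a}}_r^H(\theta_0){\bf{\Sigma}}_2^{-1}{\bf{a}}_r(\theta_0)\big)\,{\bf{a}}_t^*(\theta_0){\bf{a}}_t^T(\theta_0)$, i.e. a positive scalar times a rank-one outer product (here ${\bf{\Sigma}}_2$ is the positive-definite matrix built from the previous iterate). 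Hence I may write ${\bf{\Phi}}_0 = \lambda\hat{\bf{g}}\hat{\bf{g}}^H$ with $\lambda = \lambda_{\max}({\bf{\Phi}}_0) > 0$ and $\hat{\bf{g}}$ the unit-norm dominant eigenvector (the same $\hat{\bf{g}}$ as in Proposition~1). The radar objective then collapses to $\bar\gamma_R^{(\text{II})} = \lambda\big(|\hat{\bf{g}}^H{\bf{u}}|^2 + |\hat{\bf{g}}^H{\bf{v}}|^2\big)$.

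First I would decouple ${\bf{v}}$. The probing beam enters (P2.2) only through $|\hat{\bf{g}}^H{\bf{v}}|^2$ in the objective and through $\|{\bf{v}}\|^2$ in the power budget; it is absent from the communication constraint. By Cauchy--Schwarz $|\hat{\bf{g}}^H{\bf{v}}|^2 \le \|{\bf{v}}\|^2$, with equality iff ${\bf{v}}\parallel\hat{\bf{g}}$, so any probing power $P_v := \|{\bf{v}}\|^2$ contributes at most $\lambda P_v$. It therefore suffices to study the power split $P_u + P_v \le P_0$ between a communication beam ${\bf{u}}$ obeying $|{\bf{h}}^H{\bf{u}}|^2 \ge \Gamma$ and a probe ${\bf{v}} = \sqrt{P_v}\,\hat{\bf{g}}$. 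Writing $f^*(P_u)$ for the optimal value of $|\hat{\bf{g}}^H{\bf{u}}|^2$ in (P1.2) under budget $P_u$, which Proposition~1 supplies in closed form, the best attainable objective for a given split is $\lambda\big(f^*(P_u) + P_v\big) \le \lambda\big(f^*(P_u) + P_0 - P_u\big)$, and the claim ${\bf{v}}^*={\bf{0}}$ is equivalent to this being maximized at $P_u = P_0$.

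Next I would dispatch the two regimes of Proposition~1. When $\Gamma \le P_0|{\bf{h}}^H\hat{\bf{g}}|^2$, the choice ${\bf{u}}=\sqrt{P_0}\,\hat{\bf{g}}$ is already feasible and attains the unconstrained ceiling $f^*(P_0)=P_0$; since every split obeys $\lambda(|\hat{\bf{g}}^H{\bf{u}}|^2+|\hat{\bf{g}}^H{\bf{v}}|^2)\le\lambda(\|{\bf{u}}\|^2+\|{\bf{v}}\|^2)\le\lambda P_0$, allocating all power to ${\bf{u}}$ is optimal and ${\bf{v}}^*={\bf{0}}$. In the constrained regime $\Gamma > P_0|{\bf{h}}^H\hat{\bf{g}}|^2$, the second branch of (\ref{eq15}) yields $f^*(P) = \big(|\hat{\bf{g}}^H\hat{\bf{h}}|\sqrt{\Gamma/\|{\bf{h}}\|^2} + |\hat{\bf{g}}^H\hat{\bf{g}}_\perp|\sqrt{P - \Gamma/\|{\bf{h}}\|^2}\big)^2$. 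A direct differentiation shows that $f^{*\prime}$ is decreasing in $P$ and equals exactly $1$ at the regime boundary $P_{AB} = \Gamma/|{\bf{h}}^H\hat{\bf{g}}|^2$, so $f^{*\prime}(P)\ge 1$ for all $P\le P_0\le P_{AB}$. Integrating gives $f^*(P_0)-f^*(P_u)\ge P_0-P_u$, i.e. $f^*(P_u)+P_0-P_u\le f^*(P_0)$; again the split never beats putting all power on ${\bf{u}}$.

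The hard part is precisely this constrained regime. One is tempted to argue directly by ``moving the probing power, which lies along $\hat{\bf{g}}$, into ${\bf{u}}$,'' but that fails cleanly because ${\bf{u}}$ already carries a $\hat{\bf{g}}$-component, so the transfer perturbs both the power budget (through the cross term) and the active communication constraint. The marginal-value inequality $f^{*\prime}\ge 1$ is what makes the comparison rigorous: per unit of transmit power, the constrained communication beam raises the radar SINR at least as fast as a dedicated probe aligned with $\hat{\bf{g}}$ would, with equality only at the A/B boundary. Establishing this inequality, together with the monotonicity of $f^{*\prime}$, is the crux of the argument. Once ${\bf{v}}^*={\bf{0}}$ is secured, (P2.2) reduces verbatim to (P1.2), and Proposition~1 returns ${\bf{u}}^*$ as in (\ref{eq15}).
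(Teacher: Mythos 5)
Your proof is correct, and it follows the same overall decomposition as the paper's: parametrize by the power split between the probe and the communication beam, note that the optimal probe under budget $\tau P_0$ is $\sqrt{\tau P_0}\,\hat{\bf{g}}$, reduce the communication sub-problem to (P1.2) with budget $(1-\tau)P_0$ via Proposition~1, and show the split $\tau=0$ wins. The difference is in how the final comparison is handled. The paper simply writes down ${\bf{u}}^*+{\bf{v}}^*$ in the two regimes and asserts that the best performance is achieved at $\tau=0$; it never actually compares the objective values $f^*(P_u)+P_v$ across splits, and the displayed vector sum is not the quantity the objective depends on, so the constrained regime is effectively left unargued. You supply exactly the missing step: making the rank-one structure ${\bf{\Phi}}_0=\lambda\hat{\bf{g}}\hat{\bf{g}}^H$ explicit (which, incidentally, is also what legitimizes Proposition~1's restriction of ${\bf{u}}^*$ to $\mathrm{span}(\hat{\bf{h}},\hat{\bf{g}}_\perp)$), and then proving the marginal-value inequality $f^{*\prime}(P)\ge 1$ for $P\le P_{AB}=\Gamma/|{\bf{h}}^H\hat{\bf{g}}|^2$, with equality exactly at the regime boundary where $f^{*\prime}(P_{AB})=a^2+b^2=1$. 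This turns the paper's assertion into a genuine monotonicity argument and also explains \emph{why} the probe is useless with one CU: each watt moved from the probe to the constrained communication beam buys at least as much radar SINR. The only cosmetic caveat is that you should note the degenerate case $\hat{\bf{g}}\parallel\hat{\bf{h}}$ (where $b=0$ and $f^{*\prime}<1$) cannot occur inside the constrained-but-feasible regime, since it would force $\Gamma>P_0\|{\bf{h}}\|^2$; with that remark added, your argument is complete and strictly more rigorous than the paper's.
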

\begin{proof}
The proof is given in Appendix \ref{proof2}.
\end{proof}

Finally, the beamforming optimization algorithm for the single CU scencario can be summarized as Algorithm \ref{alg1}, which repeatedly updates  ${\bf{u}}_{}^{\left( {m-1} \right)}$ based on ${\bf{u}}_{}^{\left( m \right)}$ until the improvement of the radar receiver’s SINR becomes insignificant. 

\begin{algorithm}[t]
\caption{Beamforming design of DFRC BS for the single CU scenario.} 
\label{alg1}  
\begin{algorithmic}  
\STATE Initialize $\left\{ {{\theta _0},{\theta _1}, \cdots ,{\theta _I}} \right\}$ and $\left\{ {{\alpha _0},{\alpha _1}, \cdots ,{\alpha _I}} \right\}$.
\STATE Initialize ${{\bf{u}}^{\left( 0 \right)}} = {{\left[ {1, \cdots ,1} \right]^T} \mathord{\left/
 {\vphantom {{\left[ {1, \cdots ,1} \right]} {\sqrt {{P_0}{N_t}} }}} \right.
 \kern-\nulldelimiterspace} {\sqrt {{P_0}{N_t}} }}$.
\STATE Initialize the convergence threshold $\Delta$, and $m=0$.
\REPEAT
\STATE Set $m = m + 1$.
\STATE Calculate ${{\bf{\Phi }}_0} = {\bf{\Phi }}_1\left[ {{\bf{u}}_{}^{\left( {m - 1} \right)}} \right]$ and $\bar \gamma _R^{\left( \text{I} \right)}\left[ {{\bf{u}}_{}^{\left( {m - 1} \right)}} \right]$ according to (\ref{eq10}) and (\ref{eq11}), respectively. 
\STATE Calculate ${\bf{u}}_{}^{\left( m \right)}$ according to (\ref{eq15}) in Proposition 1.
\STATE Calculate $\bar \gamma _R^{\left( \text{I} \right)}\left[ {{\bf{u}}_{}^{\left( m \right)}} \right]$ according to (\ref{eq11}).
\UNTIL{$\left| {\bar\gamma _R^{\left( \text{I} \right)}\left[ {{\bf{u}}_{}^{\left( m \right]}} \right) - \bar\gamma _R^{\left( \text{I} \right)}\left[ {{\bf{u}}_{}^{\left( {m - 1} \right)}} \right]} \right| \le \Delta $}.
\end{algorithmic}  
\end{algorithm}

\section{Multiple CUs Scenario}
In this section, we proceed to consider a more complicated scenario with multiple CUs in the network. For the non-dedicated probing signal case, the beamforming design is formulated as a non-convex QCQP, and the globally optimal solutions can be obtained by applying SDR with rank-1 property. For the dedicated probing signal case, the rank-1 property after applying SDR can also be  proved, and the corresponding optimal solution shows that the dedicated probing signal should be employed to improve the SINR of the radar receiver.

\subsection{Non-dedicated Probing Signal Case}
For the non-dedicated probing signal case, the transmit signal of the DFRC BS in (\ref{eqa4}) with multiple CUs can be given by

\begin{equation}
{\bf{x}} = \sum\limits_{k = 1}^K {{{\bf{u}}_k}{s_k}},
\end{equation}
where ${{\bf{u}}_k} \in {{\cal C}^{{N_t} \times 1}}$ and ${s_k} \in {\cal C}{\cal N}\left( {0,1} \right)$ are the beamforming vector and the i.i.d. information symbol of the CU $k$, respectively.

According to the output in (\ref{eq2}), the SINR of the radar receiver can be written as

\begin{equation}
\label{eq36}
\begin{aligned}
\gamma _R^{\left( \text{I} \right)} &= \frac{{{{\left| {{\alpha _0}{{\bf{w}}^H}{\bf{A}}\left( {{\theta _0}} \right){\bf{x}}} \right|}^2}}}{{{\rm{E}}\left[ {{{\left| {{{\bf{w}}^H}\sum\limits_{i = 1}^I {{\alpha _i}{\bf{A}}\left( {{\theta _i}} \right){\bf{x}}} } \right|}^2}} \right] + {{\bf{w}}^H}{\bf{w}}}} \\
&= \frac{{{{\left| {{\alpha _0}} \right|}^2}{{\left| {{{\bf{w}}^H}{\bf{A}}\left( {{\theta _0}} \right){\bf{x}}} \right|}^2}}}{{{{\bf{w}}^H}\left[ {\sum\limits_{i = 1}^I {{{\left| {{\alpha _i}} \right|}^2}{\bf{A}}\left( {{\theta _i}} \right)\left( {\sum\limits_{k = 1}^K {{{\bf{u}}_k}{\bf{u}}_k^H} } \right){{\bf{A}}^H}\left( {{\theta _i}} \right)}  + {\bf{I}}} \right]{\bf{w}}}}.
\end{aligned}
\end{equation}

By solving an equivalent MVDR problem, the corresponding receive beamforming vector to maximize the output SINR can be given by

\begin{equation}
\label{eq37}
{{\bf{w}}^*} = \frac{{{{\bf{\Sigma }}_3}{{\left( {\left\{ {{{\bf{u}}_k}} \right\}} \right)}^{ - 1}}{\bf{A}}\left( {{\theta _0}} \right){\bf{x}}}}{{{{\bf{x}}^H}{{\bf{A}}^H}\left( {{\theta _0}} \right){{\bf{\Sigma }}_3}{{\left( {\left\{ {{{\bf{u}}_k}} \right\}} \right)}^{ - 1}}{\bf{A}}\left( {{\theta _0}} \right){\bf{x}}}},
\end{equation}
where 

\begin{equation}
{\bf{\Sigma }}_3\left( {\left\{ {{{\bf{u}}_k}} \right\}} \right) = \left[ {\sum\limits_{i = 1}^I {{{\left| {{\alpha _i}} \right|}^2}{\bf{A}}\left( {{\theta _i}} \right)} \left( {\sum\limits_{k = 1}^K {{{\bf{u}}_k}{\bf{u}}_k^H} } \right){{\bf{A}}^H}\left( {{\theta _i}} \right) + {\bf{I}}} \right].    
\end{equation}
Substituting (\ref{eq37}) into (\ref{eq36}), the SINR of the radar receiver can be calculated as

\begin{equation}
\gamma _R^{\left( \text{I} \right)}{\rm{ = }}{{\bf{x}}^H}{\bf{\Phi }}_3\left( {\left\{ {{{\bf{u}}_k}} \right\}} \right){\bf{x}},    
\end{equation}
where 

\begin{equation}
\label{eq40}
{\bf{\Phi }}_3\left( {\left\{ {{{\bf{u}}_k}} \right\}} \right) = {\left| {{\alpha _0}} \right|^2}{{\bf{A}}^H}\left( {{\theta _0}} \right){\bf{\Sigma }}_3{\left( {\left\{ {{{\bf{u}}_k}} \right\}} \right)^{ - 1}}{\bf{A}}\left( {{\theta _0}} \right).
\end{equation}
And the average SINR  of the radar receiver can be given by

\begin{equation}
\label{eq41}
\begin{aligned}
\bar\gamma _R^{\left( \text{I} \right)} &= {\rm{E}}\left[ {{{\bf{x}}^H}{\bf{\Phi }}_3\left( {\left\{ {{{\bf{u}}_k}} \right\}} \right){\bf{x}}} \right] \\
&= \sum\limits_{k = 1}^K {{\bf{u}}_k^H{\bf{\Phi }}_3\left( {\left\{ {{{\bf{u}}_k}} \right\}} \right)} {{\bf{u}}_k}
\end{aligned}
\end{equation}
The received signal of the CU $k$ in (\ref{eq3}) can be rewritten as 

\begin{equation}
{y_k} = {\bf{h}}_k^H{\bf{u}}_k^{}{s_k} + \sum\limits_{j \ne k}^{} {{\bf{h}}_k^H{\bf{u}}_j^{}{s_j}}  + {z_k},
\end{equation}
and the average SINR of the CU $k$ can be calculated as

\begin{equation}
\label{eq43}
\bar \gamma _{C,k}^{}\left( {\left\{ {{{\bf{u}}_k}} \right\}} \right) = \frac{{{{\left| {{\bf{h}}_k^H{\bf{u}}_k^{}} \right|}^2}}}{{1 + \sum\limits_{j \ne k}^{} {{{\left| {{\bf{h}}_k^H{\bf{u}}_j^{}} \right|}^2}} }}. 
\end{equation}

Again, we consider the beamforming optimization problem that maximizes the SINR of the radar receiver by imposing the individual SINR constraints of the CUs, i.e.,

\begin{equation}
\left( {{\rm{P3}}.{\rm{1}}} \right)\begin{array}{*{20}{l}}
{\mathop {\max }\limits_{\left\{ {{{\bf{u}}_k}} \right\}} }&{\bar \gamma _R^{\left( {\rm{I}} \right)} = \sum\limits_{k = 1}^K {{\bf{u}}_k^H{{\bf{\Phi }}_3}\left( {\left\{ {{{\bf{u}}_{\bf{k}}}} \right\}} \right)} {{\bf{u}}_k}}\\[3mm]
{{\rm{s}}.{\rm{t}}.}&{\bar \gamma _{C,k}^{} = \dfrac{{{{\left| {{\bf{h}}_k^H{\bf{u}}_k^{}} \right|}^2}}}{{1 + \sum\limits_{j \ne k}^{} {{{\left| {{\bf{h}}_k^H{\bf{u}}_j^{}} \right|}^2}} }} \ge {\Gamma _k},\forall k}\\[1.5mm]
{}&{\sum\limits_{k = 1}^K {{\bf{u}}_k^H} {{\bf{u}}_k} \le {P_0}}
\end{array}.
\end{equation}
Note that Problem (P3.1) is also non-convex. Thus, we adopt the sequential optimization to find the transmit beamforming vector ${\bf{u}}$  in an iterative way. Specifically, at the $m$-th iteration, we first compute ${{\bf{\Phi }}_0} = {\bf{\Phi }}_3[ {\{ {{\bf{u}}_k^{\left( {m - 1} \right)}} \}} ]$ , where ${\{ {{\bf{u}}_k^{\left( {m - 1} \right)}} \}}$  is obtained in the $(m-1)$-th iteration. As a result, Problem (P3.1) can be rewritten as

\begin{equation}
\left( {{\rm{P3}}.{\rm{2}}} \right)\begin{array}{*{20}{l}}
{\mathop {\max }\limits_{\left\{ {{{\bf{u}}_k}} \right\}} }&{\bar \gamma _R^{\left( {\rm{I}} \right)} = \sum\limits_{k = 1}^K {{\bf{u}}_k^H{{\bf{\Phi }}_0}} {{\bf{u}}_k}}\\[3mm]
{{\rm{s}}.{\rm{t}}.}&{\bar \gamma _{C,k}^{} = \dfrac{{{{\left| {{\bf{h}}_k^H{\bf{u}}_k^{}} \right|}^2}}}{{1 + \sum\limits_{j \ne k}^{} {{{\left| {{\bf{h}}_k^H{\bf{u}}_j^{}} \right|}^2}} }} \ge {\Gamma _k},\forall k}\\[1.5mm]
{}&{\sum\limits_{k = 1}^K {{\bf{u}}_k^H} {{\bf{u}}_k} \le {P_0}}
\end{array}
\end{equation}
While Problem (P3.2) is still a non-convex QCQP, and we can solve it via SDR with ${{\bf{U}}_k} = {{\bf{u}}_k}{\bf{u}}_k^H$, i.e,

\begin{equation}
\left( {{\rm{P3}}{\rm{.3}}} \right)\begin{array}{*{20}{l}}
{\mathop {\max }\limits_{\left\{ {{{\bf{U}}_k}} \right\}} }&{\bar \gamma _R^{\left( \text{I} \right)} = \sum\limits_{k = 1}^K {{\rm{tr}}\left( {{{\bf{\Phi }}_0}{{\bf{U}}_k}} \right)} }\\[3mm]
{{\rm{s}}{\rm{.t}}{\rm{.}}}&{\bar \gamma _{C,k}^{} = \dfrac{{{\rm{tr}}\left( {{{\bf{H}}_k}{{\bf{U}}_k}} \right)}}{{{\Gamma _k}}} - \sum\limits_{j \ne k} {{\rm{tr}}\left( {{{\bf{H}}_k}{{\bf{U}}_j}} \right)}  \ge 1,\forall k}\\[1.5mm]
{}&{\sum\limits_{k = 1}^K {{\rm{tr}}\left( {{{\bf{U}}_k}} \right)}  \le {P_0},{{\bf{U}}_k} \succeq 0,\forall k}
\end{array},
\end{equation}
where ${{\bf{H}}_k} = {\bf{h}}_k^{}{\bf{h}}_k^H$.

Although the rank-1 constraints have been removed for the convexity of the problem, the optimal solution can be proved to have the rank-1 property by the following proposition. 

\begin{proposition} (Rank-1 property of Problem (P3.3)) 
For the non-dedicated probing signal case with multiple CUs, 
there is always a solution to Problem (P3.3) satisfying that ${\rm{rank}}\left( {{\bf{U}}_k^*} \right) = 1,\forall k$. Thus, the optimal beamforming vector for the CU $k$, i.e., the optimal solution to Problem (P3.2) can be given by ${\bf{u}}_k^*$ with ${\bf{u}}_k^*{\bf{u}}_k^{*H} = {\bf{U}}_k^*$.
\end{proposition}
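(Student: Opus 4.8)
The plan is to characterize the optimizer of the convex SDP (P3.3) through its Karush--Kuhn--Tucker (KKT) conditions. First I would note that Slater's condition holds, since uniformly scaling up any feasible $\{\mathbf{U}_k\}$ makes the power constraint and all the SINR constraints strict; hence strong duality holds and the KKT system is necessary and sufficient for optimality. I would attach multipliers $\lambda_k \ge 0$ to the $K$ SINR constraints, $\mu \ge 0$ to the power constraint, and positive semidefinite matrices $\mathbf{Z}_k \succeq 0$ to the conic constraints $\mathbf{U}_k \succeq 0$, then form the Lagrangian and impose stationarity $\partial L/\partial \mathbf{U}_k = \mathbf{0}$.

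The stationarity condition reduces, after collecting terms, to $\mathbf{Z}_k = \mathbf{R} - c_k \mathbf{H}_k$, where $\mathbf{R} = \mu\mathbf{I} - \mathbf{\Phi}_0 + \sum_{m=1}^{K}\lambda_m \mathbf{H}_m$ is independent of $k$ and $c_k = \lambda_k(1+\Gamma_k)/\Gamma_k \ge 0$; the factor $c_k$ combines the contribution of $\mathbf{U}_k$ to its own SINR numerator with its appearance (with opposite sign) as interference in the other users' constraints. Using complementary slackness $\mathbf{Z}_k\mathbf{U}_k = \mathbf{0}$ together with $\mathbf{Z}_k,\mathbf{U}_k \succeq 0$, I get $\mathrm{range}(\mathbf{U}_k) \subseteq \ker(\mathbf{Z}_k)$ and hence $\mathrm{rank}(\mathbf{U}_k) \le N_t - \mathrm{rank}(\mathbf{Z}_k)$. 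Because $\mathbf{H}_k = \mathbf{h}_k\mathbf{h}_k^H$ is rank one, $\mathrm{rank}(\mathbf{Z}_k) \ge \mathrm{rank}(\mathbf{R})-1$, so the whole statement collapses to proving $\mathbf{R}\succ 0$. Moreover the SINR constraint forces $\mathrm{tr}(\mathbf{H}_k\mathbf{U}_k)>0$, so $\mathbf{U}_k \neq \mathbf{0}$ and $\mathrm{rank}(\mathbf{U}_k)\ge 1$; with $\mathbf{R}\succ 0$ this pins the rank to exactly one, giving $\mathbf{U}_k^* = \mathbf{u}_k^*\mathbf{u}_k^{*H}$.

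It remains to argue $\mathbf{R}\succ 0$, which I expect to be the main obstacle. I would first show the power budget is active at the optimum: scaling all $\mathbf{U}_k$ by $t>1$ preserves every SINR constraint while strictly raising the (nonnegative) objective, so $\sum_k\mathrm{tr}(\mathbf{U}_k^*)=P_0$ and thus $\mu>0$ by complementary slackness. Since $\mathbf{R}=\mathbf{Z}_k+c_k\mathbf{H}_k\succeq 0$, only singularity must be excluded. Here the useful structural fact is that $\mathbf{\Phi}_0$ is \emph{rank one}: because $\mathbf{A}(\theta_0)=\mathbf{a}_r(\theta_0)\mathbf{a}_t^T(\theta_0)$, equation (\ref{eq40}) gives $\mathbf{\Phi}_0 = \eta\,\mathbf{b}\mathbf{b}^H$ with $\mathbf{b}=\mathbf{a}_t^*(\theta_0)$ and $\eta = |\alpha_0|^2\mathbf{a}_r^H(\theta_0)\mathbf{\Sigma}_3^{-1}\mathbf{a}_r(\theta_0)>0$. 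If $\mathbf{R}\mathbf{v}=\mathbf{0}$ for some $\mathbf{v}\neq\mathbf{0}$, then $0=\mathbf{v}^H\mathbf{R}\mathbf{v}=\mathbf{v}^H\mathbf{Z}_k\mathbf{v}+c_k|\mathbf{h}_k^H\mathbf{v}|^2$ forces $\mathbf{h}_k^H\mathbf{v}=0$ for every active user, whence $\mathbf{R}\mathbf{v}=\mu\mathbf{v}-\mathbf{\Phi}_0\mathbf{v}=\mathbf{0}$; as $\mathbf{\Phi}_0$ is rank one and $\mu>0$, this can occur only in the degenerate configuration where $\mu=\lambda_{\max}(\mathbf{\Phi}_0)$ and every active channel $\mathbf{h}_k$ is orthogonal to $\mathbf{b}$. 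Ruling out this measure-zero alignment (or exhibiting a rank-one optimizer within it) is the crux; generically it does not arise, so $\mathbf{R}\succ 0$, $\mathrm{rank}(\mathbf{Z}_k)=N_t-1$, and the proof closes. As an alternative that sidesteps the degeneracy entirely, one could instead invoke the rank-reduction theorem for separable SDPs: (P3.3) has $K$ matrix variables and $K+1$ scalar constraints, so there exists an optimal solution with $\sum_k \mathrm{rank}^2(\mathbf{U}_k^*)\le K+1$, which together with $\mathbf{U}_k^*\neq\mathbf{0}$ forces $\mathrm{rank}(\mathbf{U}_k^*)=1$ for all $k$.
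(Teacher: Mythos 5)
Your fallback argument is precisely the paper's proof: Appendix C of the paper invokes the rank bound for separable SDPs (its reference on SDP rank reduction) to get $\sum_{k=1}^{K}\left[\mathrm{rank}\left(\mathbf{U}_k^*\right)\right]^2 \le K+1$ for some optimal solution, and combines this with $\mathbf{U}_k^*\neq\mathbf{0}$ (forced by the SINR constraints) to conclude $\mathrm{rank}\left(\mathbf{U}_k^*\right)=1$ for all $k$, since a single rank $\ge 2$ would already contribute $4$ and push the sum of squares to at least $K+3$. So the proposal, taken as a whole, does close, and it closes by the same mechanism as the paper. Your primary KKT route is a genuinely different argument in the Bengtsson--Ottersten style, and it would buy something the paper's proof does not: a structural characterization ($\mathbf{Z}_k=\mathbf{R}-c_k\mathbf{H}_k$ with $\mathbf{R}$ common to all users) showing that \emph{every} optimal solution obtained this way is rank one, rather than merely asserting the existence of one. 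The derivation of the stationarity condition, the use of $\mathbf{Z}_k\mathbf{U}_k=\mathbf{0}$, and the observation that $\mathbf{\Phi}_0$ is rank one because $\mathbf{A}(\theta_0)=\mathbf{a}_r(\theta_0)\mathbf{a}_t^T(\theta_0)$ are all correct. However, as you yourself flag, the step $\mathbf{R}\succ 0$ is not established: the degenerate configuration $\mu^*=\lambda_{\max}(\mathbf{\Phi}_0)$ with all active channels orthogonal to $\mathbf{a}_t^*(\theta_0)$ is not excluded, and since the proposition claims the property holds always (not generically, the channels being fixed data here, not random), "measure zero" does not finish the argument. As submitted, the KKT route is therefore incomplete on its own, and the proof stands only because of the final sentence, which is the paper's argument verbatim; if you want the KKT route to be self-contained you would need to handle the degenerate case explicitly, e.g.\ by exhibiting a rank-one optimizer within the null space of $\mathbf{R}$, or simply accept the separable-SDP bound as the cleaner path, as the paper does.
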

\begin{proof}
The proof is given in Appendix \ref{proof3}.
\end{proof}

\begin{algorithm}[t]
\caption{Beamforming design of DFRC BS for multiple CUs  scenario without dedicated probing signal.} 
\label{alg2}  
\begin{algorithmic}  
\STATE Initialize $\left\{ {{\theta _0},{\theta _1}, \cdots ,{\theta _I}} \right\}$ and $\left\{ {{\alpha _0},{\alpha _1}, \cdots ,{\alpha _I}} \right\}$.
\STATE Initialize $\left\{ {{{\bf{u}}_k^{(0)}}} \right\} = {{{{\left[ {1, \cdots ,1} \right]}^T}} \mathord{\left/
 {\vphantom {{{{\left[ {1, \cdots ,1} \right]}^T}} {\sqrt {K{P_0}{N_t}} }}} \right.
 \kern-\nulldelimiterspace} {\sqrt {K{P_0}{N_t}} }}$.
\STATE Initialize the convergence threshold $\Delta$, and $m=0$.
\REPEAT
\STATE Set $m = m + 1$.
\STATE Calculate ${{\bf{\Phi }}_0} = {\bf{\Phi }}_3\left[ {\left\{ {{\bf{u}}_k^{\left( {m - 1} \right)}} \right\}} \right]$ and $\bar\gamma _R^{\left( \text{I} \right)}\left[ {\left\{ {{\bf{u}}_k^{\left( {m - 1} \right)}} \right\}} \right]$ according to (\ref{eq40}) and (\ref{eq41}), respectively.
\STATE Optimize $\left\{ {{\bf{U}}_k^*} \right\}$  according to  Problem (P3.3).
\STATE Calculate $\{{\bf{u}}_k^*\}$ satisfying ${\bf{u}}_k^*{\left( {{\bf{u}}_k^*} \right)^H} = {\bf{U}}_k^*$ based on the Proposition 3.
\STATE Calculate $\bar\gamma _R^{\left( \text{I} \right)}\left[ {\left\{ {{\bf{u}}_k^{\left( m \right)}} \right\}} \right]$ according to (\ref{eq41}).
\UNTIL{$\left| {\bar\gamma _R^{\left( \text{I} \right)}\left[ {\left\{ {{\bf{u}}_k^{\left( m \right)}} \right\}} \right] - \bar\gamma _R^{\left( \text{I} \right)}\left[ {\left\{ {{\bf{u}}_k^{\left( {m{\rm{ - }}1} \right)}} \right\}} \right]} \right| \le \Delta $}.
\end{algorithmic}  
\end{algorithm}

Above all, the beamforming optimization algorithm for the multi-CU scenario and the non-dedicated probing signal case can be provided as Algorithm \ref{alg2}.
It repeatedly updates $\{ {{\bf{u}}_k^{\left( {m} \right)}} \}$ given $\{ {{\bf{u}}_k^{\left( {m-1} \right)}} \}$ until convergence.

\subsection{Dedicated Probing Signal Case}
For the dedicated probing signal case, the transmit signal of the DFRC BS in (\ref{eqa5}) with multiple CUs can be given by

\begin{equation}
{\bf{x}} = \sum\limits_{k=1}^{K} {{{\bf{u}}_k}{s_k}}  + {\bf{v}}{s_0},
\end{equation}
where ${\bf{v}} \in {{\cal C}^{{N_t} \times 1}}$ and ${s_0} \sim {\cal C}{\cal N}\left( {0,1} \right)$ are the beamforming vector and the symbol of the dedicated probing signal, respectively. And $s_k, \forall k$ and $s_0$ are i.i.d.. 

According to the output in (\ref{eq2}), the SINR of the radar receiver can be expressed as

\begin{equation}
\label{eq63}
\begin{aligned}
&\gamma _R^{\left( \text{II} \right)} = \frac{{{{\left| {{\alpha _0}{{\bf{w}}^H}{\bf{A}}\left( {{\theta _0}} \right){\bf{x}}} \right|}^2}}}{{{\rm{E}}\left[ {{{\left| {{{\bf{w}}^H}\sum\limits_{i = 1}^I {{\alpha _i}{\bf{A}}\left( {{\theta _i}} \right){\bf{x}}} } \right|}^2}} \right] + {{\bf{w}}^H}{\bf{w}}}} \\
&= \frac{{{{\left| {{\alpha _0}} \right|}^2}{{\left| {{{\bf{w}}^H}{\bf{A}}\left( {{\theta _0}} \right){\bf{x}}} \right|}^2}}}{{{{\bf{w}}^H}\left[ {\sum\limits_{i = 1}^I {{{\left| {{\alpha _i}} \right|}^2}{\bf{A}}\left( {{\theta _i}} \right)\left( {\sum\limits_{k = 1}^K {{{\bf{u}}_k}{\bf{u}}_k^H} \! +\!  {\bf{v}}{{\bf{v}}^H}} \right){{\bf{A}}^H}\left( {{\theta _i}} \right)} \!  +\!  {\bf{I}}} \right]{\bf{w}}}}    
\end{aligned}
\end{equation}
By solving an equivalent MVDR problem, the corresponding receive beamforming vector to maximize the output SINR can be given by

\begin{equation}
\label{eq64}
{\bf{w}}^* = \frac{{{\bf{\Sigma }}_4{{\left( {\left\{ {{{\bf{u}}_k}} \right\},{\bf{v}}} \right)}^{ - 1}}{\bf{A}}\left( {{\theta _0}} \right){\bf{x}}}}{{{{\bf{x}}^H}{{\bf{A}}^H}\left( {{\theta _0}} \right){\bf{\Sigma }}_4{{\left( {\left\{ {{{\bf{u}}_k}} \right\},{\bf{v}}} \right)}^{ - 1}}{\bf{A}}\left( {{\theta _0}} \right){\bf{x}}}},
\end{equation}
where 

\begin{equation}
{\bf{\Sigma }}_4\left( {\left\{ {{{\bf{u}}_k}} \right\},{\bf{v}}} \right)\!  \! =\! \!  {\sum\limits_{i = 1}^I {{{\left| {{\alpha _i}} \right|}^2}{\bf{A}}\left( {{\theta _i}} \right)}\!  \left( {\sum\limits_{k = 1}^K {{{\bf{u}}_k}{\bf{u}}_k^H} \! \!  + \! \! {\bf{v}}{{\bf{v}}^H}} \right)\! {{\bf{A}}^H}\left( {{\theta _i}} \right)\! \!  +\! \!  {\bf{I}}}.
\end{equation}
Substituting (\ref{eq64}) into (\ref{eq63}), the SINR of the radar receiver can be calculated as

\begin{equation}
\gamma _R^{\left( \text{II} \right)} = {{\bf{x}}^H}{\bf{\Phi }}_4\left( {\left\{ {{{\bf{u}}_k}} \right\},{\bf{v}}} \right){\bf{x}},
\end{equation}
where

\begin{equation}
\label{eqa70}
{\bf{\Phi }}_4\left( {\left\{ {{{\bf{u}}_k}} \right\},{\bf{v}}} \right) = {\left| {{\alpha _0}} \right|^2}{{\bf{A}}^H}\left( {{\theta _0}} \right){\bf{\Sigma }}_4{\left( {\left\{ {{{\bf{u}}_k}} \right\},{\bf{v}}} \right)^{ - 1}}{\bf{A}}\left( {{\theta _0}} \right).
\end{equation}
And the average SINR of the radar receiver can be given by

\begin{equation}
\label{eqa71}
\begin{aligned}
\bar\gamma _R^{\left( \text{II} \right)} &= {\rm{E}}\left[ {{{\bf{x}}^H}{\bf{\Phi }}_4\left( {\left\{ {{{\bf{u}}_k}} \right\},{\bf{v}}} \right){\bf{x}}} \right]\\
&=\sum\limits_{k = 1}^K {{\bf{u}}_k^H{\bf{\Phi }}_4\left( {\left\{ {{{\bf{u}}_k}} \right\},{\bf{v}}} \right){{\bf{u}}_k}}  + {\bf{v}}_{}^H{\bf{\Phi }}_4\left( {\left\{ {{{\bf{u}}_k}} \right\},{\bf{v}}} \right){\bf{v}}
\end{aligned}
\end{equation}

For the CU $k$, it has \emph{a priori} information on the probing signal. After the probing signal interference cancelling, its received SINR $\bar \gamma _{C,k}^{}\left( {\left\{ {{{\bf{u}}_k}} \right\}} \right)$ can also be expressed as (\ref{eq43}). 
Thus, the beamforming optimization problem that maximizes the SINR of the radar receiver and satisfies the SINR constraints of the CUs can be formulated as

\begin{equation}
\left( {{\rm{P4}}.{\rm{1}}} \right)\begin{array}{*{20}{l}}
{\mathop {\max }\limits_{\left\{ {{{\bf{u}}_k}} \right\},{\bf{v}}} }&{\bar \gamma _R^{\left( {{\rm{II}}} \right)} = \sum\limits_{k = 1}^K {{\bf{u}}_k^H{{\bf{\Phi }}_4}{{\bf{u}}_k}}  + {\bf{v}}_{}^H{{\bf{\Phi }}_4}{\bf{v}}}\\[3mm]
{{\rm{s}}.{\rm{t}}.}&{\bar \gamma _{C,k}^{} = \dfrac{{{{\left| {{\bf{h}}_k^H{\bf{u}}_k^{}} \right|}^2}}}{{1 + \sum\limits_{j \ne k}^{} {{{\left| {{\bf{h}}_k^H{\bf{u}}_j^{}} \right|}^2}} }} \ge {\Gamma _k},\forall k}\\[1.5mm]
{}&{\sum\limits_{k = 1}^K {{\bf{u}}_k^H{{\bf{u}}_k}}  + {\bf{v}}_{}^H{\bf{v}} \le {P_0}}
\end{array}
\end{equation}

Let us then employ the sequential optimization to find the transmit beamforming vector ${\left\{ {{{\bf{u}}_k}} \right\}}$  and ${\bf{v}}$ in an iterative fashion, where we first compute ${{\bf{\Phi }}_0} = {\bf{\Phi }}_4[ {\{ {{\bf{u}}_k^{(m - 1)}} \},{\bf{v}}_{}^{(m - 1)}}]$  at the $m$-th iteration with  ${\{ {{\bf{u}}_k^{(m - 1)}} \}}$ and ${\bf{v}}_{}^{\left( {m - 1} \right)}$ being obtained from the $(m-1)$-th iteration. Thus, the beamforming  optimization problem can be again formulated as

\begin{equation}
\left( {{\rm{P4}}.{\rm{2}}} \right)\begin{array}{*{20}{l}}
{\mathop {\max }\limits_{\left\{ {{{\bf{u}}_k}} \right\},{\bf{v}}} }&{\bar \gamma _R^{\left( {{\rm{II}}} \right)} = \sum\limits_{k = 1}^K {{\bf{u}}_k^H{{\bf{\Phi }}_0}{{\bf{u}}_k}}  + {\bf{v}}_{}^H{{\bf{\Phi }}_0}{\bf{v}}}\\[3mm]
{{\rm{s}}.{\rm{t}}.}&{\bar \gamma _{C,k}^{} = \dfrac{{{{\left| {{\bf{h}}_k^H{\bf{u}}_k^{}} \right|}^2}}}{{1 + \sum\limits_{j \ne k}^{} {{{\left| {{\bf{h}}_k^H{\bf{u}}_j^{}} \right|}^2}} }} \ge {\Gamma _k},\forall k}\\[1.5mm]
{}&{\sum\limits_{k = 1}^K {{\bf{u}}_k^H{{\bf{u}}_k}}  + {\bf{v}}_{}^H{\bf{v}} \le {P_0}}
\end{array},
\end{equation}
which can be solved using the SDR by letting  ${{\bf{U}}_k} = {{\bf{u}}_k}{\bf{u}}_k^H$ and  ${{\bf{V}}_k} = {{\bf{v}}_k}{\bf{v}}_k^H$, i.e,

\begin{equation}
\left( {{\rm{P4}}{\rm{.3}}} \right)\begin{array}{*{20}{l}}
{\mathop {\max }\limits_{\left\{ {{{\bf{U}}_k}} \right\},{\bf{V}}} }&{\bar \gamma _R^{\left( \text{II} \right)} = \sum\limits_{k = 1}^K {{\rm{tr}}\left( {{{\bf{\Phi }}_0}{{\bf{U}}_k}} \right){\rm{ + tr}}\left( {{{\bf{\Phi }}_0}{\bf{V}}} \right)} }\\[3mm]
{{\rm{s}}{\rm{.t}}}&{\bar \gamma _{C,k}^{}\! =\! \dfrac{{{\rm{tr}}\left( {{{\bf{H}}_k}{{\bf{U}}_k}} \right)}}{{{\Gamma _k}}}\!\! - \!\!\sum\limits_{j \ne k} {{\rm{tr}}\left( {{{\bf{H}}_k}{{\bf{U}}_j}} \right)}  \ge 1,\forall k}\\[1.5mm]
{}&{\sum\limits_{k = 1}^K {{\rm{tr}}\left( {{{\bf{U}}_k}} \right)}  + {\rm{tr}}\left( {\bf{V}} \right) \le {P_0}}\\[1.5mm]
{}&{{{\bf{U}}_k} \succeq 0,\forall k,{\bf{V}} \succeq 0}
\end{array}.
\end{equation}

Although the rank-1 constraints have been removed for the convexity of the problem, the optimal solution can be guaranteed to have rank-1 property by the following proposition. Furthermore, the dedicated probing signal should be employed according to the following proposition. 

\begin{proposition} (Rank-1 property of Problem (P4.3)) 
For the non-dedicated probing signal case with multiple CUs, there is always a solution to Problem (P4.3) satisfying that ${\rm{rank}}\left( {{\bf{U}}_k^*} \right) = 1,\forall k$ and ${\rm{rank}}\left( {{{\bf{V}}^*}} \right) \le 1$. Thus, the optimal beamforming vector for the CU $k$ and the dedicated probing signal, i.e., the optimal solution to Problem (P4.2) can be given by ${\bf{u}}_k^*$  and ${\bf{v}}^*$ with ${\bf{u}}_k^*{\bf{u}}_k^{*H} = {\bf{U}}_k^*$ and ${\bf{v}}^*{\bf{v}}^{*H} = {\bf{V}}^*$. Specifically, ${{\bf{v}}^*} = \sqrt {\tau {P_0}} {\bf{\hat g}}$, where ${\bf{\hat g}} = {{\bf{g}} \mathord{\left/
 {\vphantom {{\bf{g}} \left\| {\bf{g}} \right\|}} \right.
 \kern-\nulldelimiterspace} \left\| {\bf{g}} \right\|}$, ${\bf{g}}$ is the dominant eigenvector of  ${{\bf{\Phi }}_0}$, and $0 \le \tau  \le 1$.
\end{proposition}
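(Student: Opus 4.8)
The plan is to exploit the special position of $\mathbf{V}$ in Problem (P4.3): it enters only the objective, through the linear term $\text{tr}(\mathbf{\Phi}_0\mathbf{V})$, and the power budget, through $\text{tr}(\mathbf{V})$, but it is absent from every SINR constraint. This decoupling lets me treat the rank of $\mathbf{V}$ and the ranks of the $\{\mathbf{U}_k\}$ almost independently: the former by a direct eigenvalue argument, and the latter by essentially the same KKT analysis already used for Proposition 3.

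First I would fix the power $P_v=\text{tr}(\mathbf{V})$ assigned to the probing signal and maximize $\text{tr}(\mathbf{\Phi}_0\mathbf{V})$ over $\mathbf{V}\succeq 0$ subject to $\text{tr}(\mathbf{V})=P_v$. Since $\mathbf{\Phi}_0$ is Hermitian positive semidefinite, for any feasible $\mathbf{V}$ one has $\text{tr}(\mathbf{\Phi}_0\mathbf{V})\le \lambda_{\max}(\mathbf{\Phi}_0)\,\text{tr}(\mathbf{V})=\lambda_{\max}(\mathbf{\Phi}_0)P_v$, with equality attained by the rank-one choice $\mathbf{V}=P_v\hat{\mathbf{g}}\hat{\mathbf{g}}^H$, where $\hat{\mathbf{g}}$ is a normalized dominant eigenvector of $\mathbf{\Phi}_0$. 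Because $\mathbf{V}$ does not appear in the SINR constraints, replacing any optimal $\mathbf{V}$ by this rank-one matrix of equal trace preserves both feasibility and objective value, so there is always an optimal solution with $\text{rank}(\mathbf{V}^*)\le 1$. Writing the allocated power as $P_v=\tau P_0$ with $0\le\tau\le1$ then yields $\mathbf{v}^*=\sqrt{\tau P_0}\,\hat{\mathbf{g}}$.

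Next I would establish $\text{rank}(\mathbf{U}_k^*)=1$ from the KKT conditions, which are necessary and sufficient since (P4.3) is a convex SDP satisfying Slater's condition. Introducing multipliers $\lambda_k\ge0$ for the SINR constraints, $\mu\ge0$ for the power constraint, and $\mathbf{\Psi}_k,\mathbf{\Psi}_V\succeq0$ for the positive-semidefinite constraints, stationarity in $\mathbf{V}$ gives $\mathbf{\Psi}_V=\mu\mathbf{I}-\mathbf{\Phi}_0\succeq0$, hence $\mu\ge\lambda_{\max}(\mathbf{\Phi}_0)$; this is exactly the coupling that renders $\mathbf{B}_k:=\mu\mathbf{I}-\mathbf{\Phi}_0+\sum_{j\ne k}\lambda_j\mathbf{H}_j$ positive semidefinite. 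Stationarity in $\mathbf{U}_k$ then reads $\mathbf{\Psi}_k=\mathbf{B}_k-\tfrac{\lambda_k}{\Gamma_k}\mathbf{h}_k\mathbf{h}_k^H$, a (nearly) full-rank matrix minus a rank-one term, so $\text{rank}(\mathbf{\Psi}_k)\ge N_t-1$. Complementary slackness $\mathbf{\Psi}_k\mathbf{U}_k=0$ then forces $\text{rank}(\mathbf{U}_k)\le N_t-\text{rank}(\mathbf{\Psi}_k)\le1$, while the active SINR constraint rules out $\mathbf{U}_k=0$, giving $\text{rank}(\mathbf{U}_k^*)=1$. This step mirrors the argument behind Proposition 3, which I would reuse almost verbatim since the $\mathbf{U}_k$-block of the problem is structurally identical.

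The hard part will be the strict rank statement for $\mathbf{U}_k$, i.e.\ verifying that $\mathbf{B}_k$ is nonsingular (equivalently, that $\mathbf{\Psi}_k$ cannot drop below rank $N_t-1$). The only way $\mathbf{B}_k$ can be singular is if a null vector lies simultaneously in the top eigenspace of $\mathbf{\Phi}_0$ (when $\mu=\lambda_{\max}(\mathbf{\Phi}_0)$) and in the null space of every $\mathbf{H}_j$, $j\ne k$; I would exclude this degenerate alignment by testing the KKT stationarity identity against a putative null vector of $\mathbf{B}_k$ and invoking $\mathbf{\Psi}_k\succeq0$ together with $\lambda_k>0$ at the active constraint, exactly as in the proof of Proposition 3. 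A secondary subtlety is the case where $\lambda_{\max}(\mathbf{\Phi}_0)$ is degenerate: then the unconstrained maximizer of $\text{tr}(\mathbf{\Phi}_0\mathbf{V})$ need not be rank-one, but the ``there is always a solution'' phrasing is still honoured, because any single dominant eigenvector furnishes a rank-one $\mathbf{V}$ achieving the same optimum.
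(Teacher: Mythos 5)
Your treatment of $\mathbf{V}$ is correct and in fact more direct than the paper's: since $\mathbf{V}$ appears only through $\mathrm{tr}(\mathbf{\Phi}_0\mathbf{V})$ and $\mathrm{tr}(\mathbf{V})$, the bound $\mathrm{tr}(\mathbf{\Phi}_0\mathbf{V})\le\lambda_{\max}(\mathbf{\Phi}_0)\,\mathrm{tr}(\mathbf{V})$ plus the replacement $\mathbf{V}\mapsto\mathrm{tr}(\mathbf{V})\,\hat{\mathbf{g}}\hat{\mathbf{g}}^H$ immediately yields an optimal solution with $\mathrm{rank}(\mathbf{V}^*)\le 1$ and $\mathbf{v}^*=\sqrt{\tau P_0}\,\hat{\mathbf{g}}$, whereas the paper reaches the same form via complementary slackness on $\mathbf{E}=\mathbf{\Phi}_0-\mu\mathbf{I}$. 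However, your handling of the $\mathbf{U}_k$ blocks diverges from the paper and contains a genuine gap. The paper does not prove Proposition 3 (or 4) by KKT rank-nullity at all; it invokes the rank bound for separable SDPs from \cite{5233822}, namely that an optimal solution exists with $\sum_{k}[\mathrm{rank}(\mathbf{U}_k^*)]^2+[\mathrm{rank}(\mathbf{V}^*)]^2\le K+1$ (the number of constraints), and then the SINR constraints force $\mathrm{rank}(\mathbf{U}_k^*)\ge 1$ for all $k$, which pins every $\mathbf{U}_k^*$ to rank one and leaves $\mathrm{rank}(\mathbf{V}^*)\le 1$ with no nondegeneracy assumption. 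So your plan to ``reuse the Proposition 3 argument almost verbatim'' rests on a misreading of what that argument is.

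The concrete failure point is your claimed exclusion of a singular $\mathbf{B}_k=\mu\mathbf{I}-\mathbf{\Phi}_0+\sum_{j\ne k}\lambda_j\mathbf{H}_j$. When $\mu^*=\lambda_{\max}(\mathbf{\Phi}_0)$, a vector $\mathbf{z}$ in the dominant eigenspace of $\mathbf{\Phi}_0$ that is orthogonal to every $\mathbf{h}_j$ with $\lambda_j>0$, $j\ne k$, satisfies $\mathbf{B}_k\mathbf{z}=\mathbf{0}$, and testing $\mathbf{\Psi}_k=\mathbf{B}_k-\tfrac{\lambda_k}{\Gamma_k}\mathbf{h}_k\mathbf{h}_k^H\succeq 0$ against $\mathbf{z}$ yields only $\mathbf{h}_k^H\mathbf{z}=0$, which places $\mathbf{z}$ \emph{inside} $\mathrm{null}(\mathbf{\Psi}_k)$ rather than ruling it out. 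The null space of $\mathbf{\Psi}_k$ can then have dimension two or more, so complementary slackness no longer bounds $\mathrm{rank}(\mathbf{U}_k)$ by one; indeed, in this degenerate configuration higher-rank optimal $\mathbf{U}_k^*$ genuinely exist (power carried by $\mathbf{U}_k$ along $\mathbf{z}$ is interchangeable with power in $\mathbf{V}$). The existence claim of the proposition still holds, but establishing it requires either a rank-reduction/power-transfer step (move the $\mathbf{z}$-component of $\mathbf{U}_k$ into $\mathbf{V}$) or the separable-SDP rank bound the paper actually uses; as written, your proof only covers the nondegenerate case $\mathbf{B}_k\succ 0$.
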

\begin{proof}
The proof is given in Appendix \ref{proof4}.
\end{proof}

\begin{remark} (Dedicated probing signal is employed or not)
It can be observed that any feasible solution to Problem (4.2) is also feasible for Problem (3.2) with  ${{\bf{v}}^*} = 0$, and vice versa. If  ${{\bf{v}}^*} \ne 0$, a higher SINR of radar receiver can be achievable, which will also be verified by the simulation results in the next section. Therefore, it is beneficial to employ the dedicated probing signal for multiple CUs scenario. The benefit is achieved at the cost of implementing an additional interference cancellation with \emph{a priori} known probing signals by all CUs. 
\end{remark}

Finally, we can also use the solution  $\{ {{\bf{u}}_k^{\left( {m-1} \right)}} \}$ and ${{\bf{v}}^{\left( {m-1} \right)}}$ to update $\{ {{\bf{u}}_k^{\left( {m} \right)}} \}$ and ${{\bf{v}}^{\left( {m} \right)}}$, and it is repeated until the improvement of the radar receiver’s SINR becomes insignificant as illustrated in Algorithm \ref{alg3}.

\begin{algorithm}[t]
\caption{Beamforming design of DFRC BS for multiple CUs  scenario with dedicated probing signal.} 
\label{alg3}  
\begin{algorithmic}  
\STATE Initialize $\left\{ {{\theta _0},{\theta _1}, \cdots ,{\theta _I}} \right\}$ and $\left\{ {{\alpha _0},{\alpha _1}, \cdots ,{\alpha _I}} \right\}$.
\STATE Initialize $\left\{ {{{\bf{u}}_k^{(0)}}} \right\} = {{\bf{v}}^{(0)}} = {{{{\left[ {1, \cdots ,1} \right]}^T}} \mathord{\left/
 {\vphantom {{{{\left[ {1, \cdots ,1} \right]}^T}} {\sqrt {\left( {K + 1} \right){P_0}{N_t}} }}} \right.
 \kern-\nulldelimiterspace} {\sqrt {\left( {K + 1} \right){P_0}{N_t}} }}.$
\STATE Initialize the convergence threshold $\Delta$, and $m=0$.
\REPEAT
\STATE Set $m = m + 1$.
\STATE Calculate ${{\bf{\Phi }}_0} = {{\bf{\Phi }}_4}\left[ {\left\{ {{\bf{u}}_k^{\left( {m - 1} \right)}} \right\},{\bf{v}}_{}^{\left( {m - 1} \right)}} \right]$ according to 
(\ref{eqa70})
and $\bar \gamma _R^{\left( {{\rm{II}}} \right)}\left[ {\left\{ {{\bf{u}}_k^{\left( {m - 1} \right)}} \right\},{\bf{v}}_{}^{\left( {m - 1} \right)}} \right]$ according to (\ref{eqa71}).
\STATE Optimize $\left\{ {{\bf{U}}_k^*} \right\}$ and ${{\bf{V}}^*}$  according to  Problem (P4.3);
\STATE Calculate $\{{\bf{u}}_k^*\}$ and ${\bf{v}}^*$ satisfying ${\bf{u}}_k^*{\left( {{\bf{u}}_k^*} \right)^H} = {\bf{U}}_k^*$ and ${{\bf{v}}^*}{{\bf{v}}^{*H}} = {{\bf{V}}^*}$
based on the Proposition 4.
\STATE Calculate $\bar \gamma _R^{\left( {{\rm{II}}} \right)}\left[ {\left\{ {{\bf{u}}_k^{\left( {m} \right)}} \right\},{\bf{v}}_{}^{\left( {m} \right)}} \right]$ according to (\ref{eqa71}).
\UNTIL{$\left| {\bar \gamma _R^{\left( {{\rm{II}}} \right)}\left[ {\left\{ {{\bf{u}}_k^{\left( m \right)}} \right\} , {\bf{v}}_{}^{\left( m \right)}} \right] - \bar \gamma _R^{\left( {{\rm{II}}} \right)}\left[ {\left\{ {{\bf{u}}_k^{\left( {m - 1} \right)}} \right\} , {\bf{v}}_{}^{\left( {m - 1} \right)}} \right]} \right| \le   \Delta$}.
\end{algorithmic}  
\end{algorithm}

\section{Simulation Results}
In this section, we evaluate the performance of our proposed design via numerical simulations. We assume that both the DFRC BS and the radar receiver are equipped with uniform linear arrays (ULAs) with the same number of elements. The interval between adjacent antennas of the DFRC  BS and the radar receiver is half-wavelength. 
The transmit power constraint of DFRC BS is set as $P_0=20$ dBm.
A target is located at the spatial angle $\theta_0=0^{\circ}$ with power ${{{\left| {{\alpha _0}} \right|}^2}}=10$ dB, and four fixed interferences are located at the spatial angels $\theta_1=-60^{\circ}$, $\theta_2=-30^{\circ}$, $\theta_3=30^{\circ}$, $\theta_4=60^{\circ}$, respectively. The power of each interference is ${{{\left| {{\alpha _i}} \right|}^2}}=30$ dB, $\forall i$. The channel vector of each CU is randomly generated from i.i.d. Rayleigh fading.

\subsection {Single CU Scenario}
In Fig. \ref{SUregion}, the tradeoff between the SINR constraint of CU and the SINR of radar is shown for the single CU scenario by varying the threshold of the CU's SINR $\Gamma$. It is easy to identify two boundary points of the tradeoff, i.e., ``Radar benchmark'' and ``Communication benchmark''. When the constraint of the CU's SINR is inactive, the radar's SINR is around $38$ dB and the CU's SINR is around $20$ dB. When the constraint of the CU's SINR is active, the maximum feasible constraint of the CU's SINR is around $28$ dB, and the corresponding radar's SINR is around $25$ dB. The optimal tradeoff between the SINR of CU and the SINR of radar is characterized by the solid line. It can be observed that the SINR of the radar decreases with the increase of the CU's SINR constraint, when the CU's SINR constraint is above $20$ dB.
Also, the tradeoff between the SINR of CU and the SINR of radar can be achieved by time sharing, which is shown by the dotted line. However, the optimal beamforming design for the DFRC BS yields better tradeoff performance.

\begin{figure}[t]
\centering
  \includegraphics[width=0.5\textwidth]{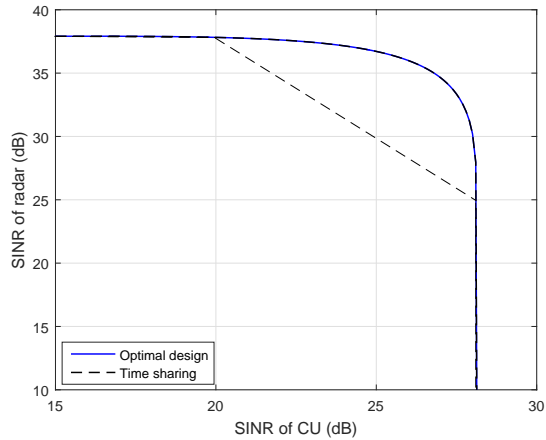}
\caption{Tradeoff between the SINR constraint of CU and the SINR of radar, $N_t=N_r=8$, ${\bf{h}}$ = [0.21 - 0.02i,-0.56 + 0.65i,0.57- 0.23i,-0.93 + 0.47i,-0.19 + 1.35i,-0.19 + 0.11i,1.05 - 0.21i,1.02 - 0.35i].}
\label{SUregion}
\end{figure}

In Fig. \ref{SUbeampattern}, the optimized beampatterns with different constraints of the CU's SINR are illustrated. 
The nulls are clearly placed at the locations of interferences, i.e., $\theta_1=-60^{\circ}$, $\theta_2=-30^{\circ}$, $\theta_3=30^{\circ}$, $\theta_4=60^{\circ}$, and the target is located at $\theta_0=0^{\circ}$.
When the SINR constraint of the CU is $15$ dB, it is inactive, and the beampattern is actually the optimal beampattern to maximize the radar's SINR. When the SINR constraint of the CU increases to $25$ dB, the performance of beampattern becomes worse from the radar's viewpoint. Particularly,  both the  peak  to  sidelobe  ratio (PSLR) and the main beam power both decrease with the increase of the SINR constraint of the CU. Furthermore, the optimized beampatterns with different numbers of the transmit and receive antennas are shown in Fig. \ref{Ntbeampattern}. The SINR constraint of the CU is fixed as $25$ dB. When the number of the DFRC BS antennas increases, 
the performance of beampattern becomes better from the radar's viewpoint. In particular,  the PSLR increases with the number of the DFRC BS antennas, and the main beam width decreases with the number of the DFRC BS antennas.

\begin{figure}[t]
\centering
  \includegraphics[width=0.5\textwidth]{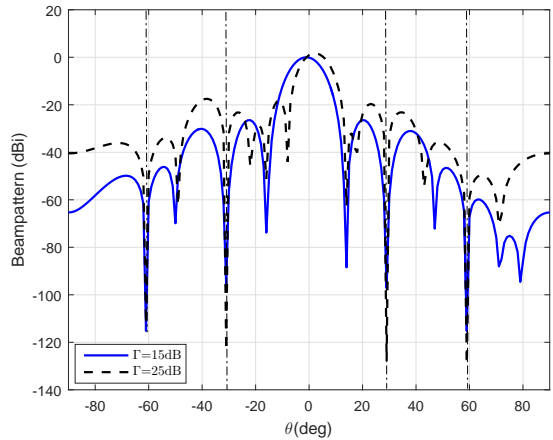}
\caption{Optimized beampatterns with different constraints of the CU's SINR, $N_t=N_r=8$, ${\bf{h}}= [0.21 - 0.02i,-0.56 + 0.65i,0.57- 0.23i,-0.93 + 0.47i,-0.19 + 1.35i,-0.19 + 0.11i,1.05 - 0.21i,1.02 - 0.35i]$ .}
\label{SUbeampattern}
\end{figure}

\begin{figure}[t]
\centering
  \includegraphics[width=0.5\textwidth]{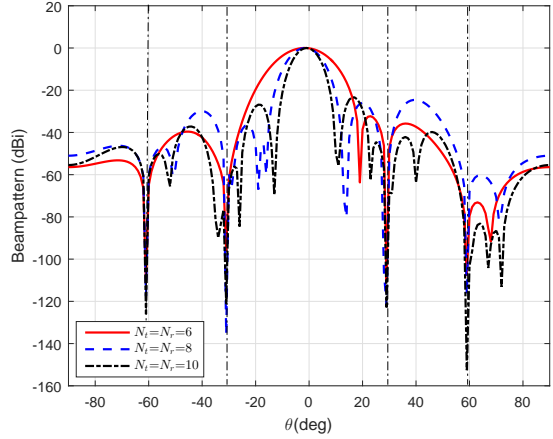}
\caption{Optimized beampatterns with different numbers of the DFRC BS antennas, $N_t=N_r$,
${\bf{h}}_1 = [1.43-0.47i,-0.83-0.56i,0.05-0.73i,-0.32 + 0.31i,0.21+0.68i,-0.56-0.95i]$, ${\bf{h}}_2$ = [${\bf{h}}_1, 0.23-0.62i,0.90+0.60i$], ${\bf{h}}_3$ = [${\bf{h}}_2,-0.22-0.83i,0.56-1.22i$], $\Gamma=20$ dB.}
\label{Ntbeampattern}
\end{figure}

\subsection {Multiple CUs Scenario}

In Fig. \ref{Iteration}, we evaluate the convergence performance of the proposed algorithm for different numbers of CUs. 
The SINR of the radar versus the number of iterations is provided. The SINR of the radar converges to a fixed value with only 2 iterations.
And the converged SINR performance decreases with the increase of the number of the CUs.
Furthermore, the same
converged SINR of the radar for dedicated probing signal case and non-dedicated probing signal case is observed, when the number of CU is 1. And the converged SINR of the radar improves with employing dedicated probing signal for the multiple CUs scenario.

\begin{figure}[t]
\centering
  \includegraphics[width=0.5\textwidth]{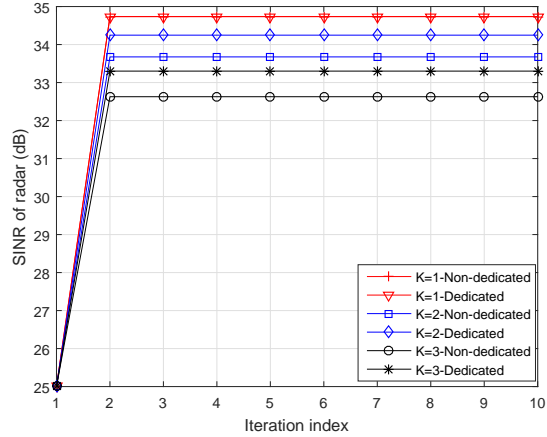}
\caption{Convergence performance of the proposed algorithm for different numbers of CU, $N_t=N_r=6$, $\Gamma=20$dB, ${\bf{h}}_1=[-0.33 - 1.17i,-0.47 - 0.61i,0.42 - 0.82i,0.22 - 0.71i,0.21 + 0.58i,-0.19 + 0.13i]$, ${\bf{h}}_2=[-0.69 + 0.67i,-0.38 + 0.79i,-0.32 + 1.09i,1.04 - 0.02i,-0.17 + 0.05i,-2.01 - 0.60i]$, ${\bf{h}}_3=[0.80 - 0.48i,-0.68 - 0.23i,0.10 - 0.17i,-0.01 - 0.19i,0.06 + 0.62i,-0.34 - 0.03i]$}
\label{Iteration}
\end{figure}

The average performance of the tradeoff between the SINR constraints of CUs and the average SINR of radar is evaluated in Fig. \ref{MUregion} through $10^4$ Monte Carlo simulations.
The SINR of radar exponentially decreases with the increase of the CUs' SINR constraint. And the average SINR of radar becomes worse when the number of the CU increases. That is because the feasible region of the radar beamforming optimization becomes smaller when either the  CUs' SINR constraint or the number of the CUs increase. Furthermore, it can be observed that the average SINR of radar is the same for the dedicated probing signal case and non-dedicated probing signal case, when the number of CU is 1. That verifies  Proposition 2. And, it can also observed that the average SINR of the radar improves with employing dedicated probing signal for multiple CUs scenario. In the case that the CUs' SINR constraints are loose, the improvement is not obvious because the CUs' SINR constraints are inactive. When the CUs' SINRs are constrained to be large, the improvement becomes obvious. 
For example, when the number of CU is 2 and CUs' SINR constraint is 10 dB, it improves about 1 dB.

\begin{figure}[t]
\centering
  \includegraphics[width=0.5\textwidth]{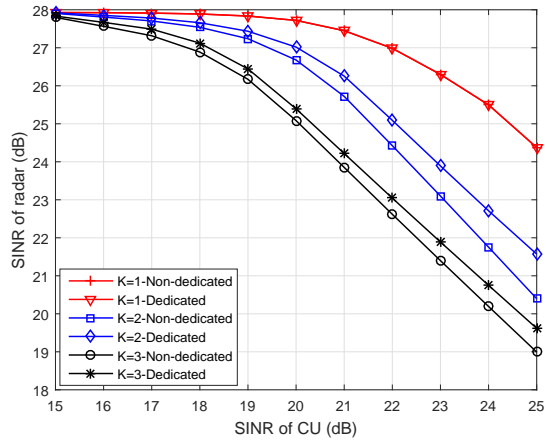}
\caption{Tradeoff between the SINR constraints of CUs and the average SINR of radar, $N_t=N_r=8$.}
\label{MUregion}
\end{figure}

In Fig. \ref{KSNR}, the average SINR of radar versus the numbers of CUs is illustrated with different number of transmit and receive antennas through $10^4$ Monte Carlo simulations.
The average SINR of radar decreases with the increase of the number of CUs, and it increases with the increase of the number of transmit and receive antennas. When the number of transmit and receive antennas is large, the decrease in the average SINR of radar with the number of CUs is not obvious. 
It can also be observed that the average SINR of radar improves by employing the dedicated probing signal. And the improvement increases with the number of CUs.

\begin{figure}[t]
\centering
  \includegraphics[width=0.5\textwidth]{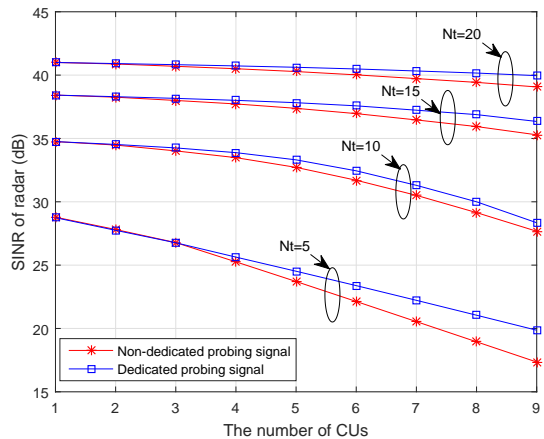}
\caption{Average SINR of radar with different number of CUs and antennas, $N_t=N_r$, $\Gamma=25$ dB.  }
\label{KSNR}
\end{figure}

\section{Conclusion}
In this paper, we have proposed the joint optimization of transmit and receive beamforming for the DFRC system. The optimal tradeoff of SINR between radar and communication has been characterized through maximizing the SINR of the radar under the SINR constraints of the CUs. For the single CU scenario, we have given the closed-form solution of the optimized beamforming, and it has been proved that there is no need of dedicated probing signals. For the multiple CUs scenario, the beamforming design has been formulated as a non-convex QCQP. We have obtained the optimal solutions by applying SDR with rand-1 property, and it has been proved that the dedicated probing signal should be employed to improve the SINR of the radar. Numerical results have been provided to show that our algorithm is effective. 

Future research may focus on beamforming designs with further practical constraints, e.g. constant modulus constraints. Also, our work is based on the prior information of both radar and communication. Thus, the robust design will be considered with the imperfect prior information of the radar and the imperfect CSI of the communication.

\appendices
\section{Proof of Proposition 1} \label{proof1}
If $\Gamma  \le {\left| {{{\bf{h}}^H}{\bf{\hat g}}} \right|^2}$, the SINR constraint of the CU is inactive. Thus, Problem (P1.2) reduces to the MIMO radar beamforming optimization problem without communication constraints, i.e., 

\begin{equation}
\begin{array}{*{20}{l}}
{\mathop {\max }\limits_{\bf{u}} }&{\bar \gamma _R^{\left( \text{I} \right)} = {\bf{u}}_{}^H{{\bf{\Phi }}_0}{\bf{u}}}\\[3mm]
{{\rm{s}}{\rm{.t}}{\rm{.}}}&{{\bf{u}}_{}^H{\bf{u}} \le {P_0}}
\end{array},
\end{equation}
and the corresponding optimal solution is ${\bf{u}}_{}^* = \sqrt {{P_0}} {\bf{\hat g}}$.

If ${P_0}{\left| {\bf{h}} \right|^2} \ge \Gamma  > {P_0}{\left| {{{\bf{h}}^H}{\bf{\hat g}}} \right|^2}$, the SINR constraint is active. The optimal beamforming vector should lie in the pace jointly spanned  by ${\bf{\hat h}}$  and the projection of ${\bf{g}}$  into the null space of ${\bf{\hat h}}$, i.e., ${{\bf{\hat g}}_ \bot }$, where ${\bf{g}}_ \bot ^H{\bf{\hat h}} = {\bf{0}}$.
The transmit power of ${\bf{u}}_{}^*$  allocated in the direction of ${\bf{\hat h}}$  should satisfy the SINR constraint with the coefficient $\alpha$ in (\ref{eq16}), and the left power is allocated in the direction of ${{\bf{\hat g}}_ \bot }$ with the coefficient $\beta$ in (\ref{eq17}), which improves the SINR of the radar receiver without influencing the SINR of the CU. 
 
If $\Gamma  > {P_0}{\left| {\bf{h}} \right|^2}$, the SINR constraint cannot be satisfied even with maximum ratio transmission to the CU.

\section{Proof of Proposition 2} \label{proof2}
Assuming the power of the dedicated probing signal is $\tau {P_0}$  with  $0 \le \tau  \le 1$, Problem (P2.2) can be decomposed into the following two problems, i.e.,

\begin{equation}
\left( {{\rm{P2}}{\rm{.3}}} \right)\begin{array}{*{20}{l}}
{\mathop {\max }\limits_{\bf{v}} }&{{{\bf{v}}^H}{{\bf{\Phi }}_0}{\bf{v}}}\\[3mm]
{{\rm{s}}{\rm{.t}}{\rm{.}}}&{{{\bf{v}}^H}{\bf{v}} \le \tau {P_0}}
\end{array}
\end{equation}
and
\begin{equation}
\left( {{\rm{P2}}{\rm{.4}}} \right)\begin{array}{*{20}{l}}
{\mathop {\max }\limits_{\bf{u}} }&{{\bf{u}}_{}^H{{\bf{\Phi }}_0}{\bf{u}}}\\[3mm]
{{\rm{s}}{\rm{.t}}{\rm{.}}}&{\bar \gamma _C^{} ={\left| {{{\bf{h}}^H}{\bf{u}}} \right|^2} \ge \Gamma }\\[1.5mm]
{}&{{{\bf{u}}^H}{\bf{u}} \le \left( {1 - \tau } \right){P_0}}
\end{array}.
\end{equation}

For Problem (P2.3), the optimal beamforming vector of the probing signal can be calculated as

\begin{equation}
\label{eq29}
{{\bf{v}}^*} = \sqrt {\tau {P_0}} {\bf{\hat g}},
\end{equation}
where ${\bf{\hat g}} = {{\bf{g}} \mathord{\left/
 {\vphantom {{\bf{g}} \left\| {\bf{g}} \right\|}} \right.
 \kern-\nulldelimiterspace} \left\| {\bf{g}} \right\|}$ and ${\bf{g}}$ is the dominant eigenvector of  ${{\bf{\Phi }}_0}$. 
For Problem (P2.4), assuming $P_1=\tau P_0$, the optimal beamforming vector of the probing signal can be calculated  according to Proposition 1, i.e.,

\begin{equation}
{\bf{u}}_{}^* = \left\{ {\begin{array}{*{20}{l}}
{\sqrt {\left( {1 - \tau } \right){P_0}} {\bf{\hat g}},}&{\Gamma  \le \left( {1 - \tau } \right){P_0}{{\left| {{{\bf{h}}^H}{\bf{\hat g}}} \right|}^2}}\\[3mm]
{\left( {\alpha '{\bf{\hat h}} + \beta '{{{\bf{\hat g}}}_ \bot }} \right),}&\begin{array}{l}
\left( {1 - \tau } \right){P_0}{\left| {\bf{h}} \right|^2} \ge \Gamma \\
\Gamma  > \left( {1 - \tau } \right){P_0}{\left| {{{\bf{h}}^H}{\bf{\hat g}}} \right|^2}
\end{array}
\end{array}} \right.,
\end{equation}
\begin{equation}
\alpha ' = \sqrt {\frac{\Gamma }{{{\left\| {\bf{h}} \right\|^2}}}} \frac{{{\alpha _g}}}{{\left| {{\alpha _g}} \right|}},
\end{equation}
\begin{equation}
\beta '= \sqrt {(1-\tau){P_0} - \frac{\Gamma }{{{\left\| {\bf{h}} \right\|^2}}}} \frac{{{\beta _g}}}{{\left| {{\beta _g}} \right|}},
\end{equation}
where ${\bf{g}}$ is the dominant eigenvector of  ${{\bf{\Phi }}_0}$, ${\bf{\hat g}} = {{\bf{g}} \mathord{\left/
 {\vphantom {{\bf{g}} \left\| {\bf{g}} \right\|}} \right.
 \kern-\nulldelimiterspace} \left\| {\bf{g}} \right\|}$, ${\bf{\hat h}} = {{{{\bf{h}}}} \mathord{\left/
 {\vphantom {{{{\bf{h}}}} {\left\| {\bf{h}} \right\|}}} \right.
 \kern-\nulldelimiterspace} {\left\| {\bf{h}} \right\|}}$, 
${{\bf{g}}_ \bot } = {\bf{g}} - ( {{{{\bf{\hat h}}}^H}{\bf{g}}} ){\bf{\hat h}}$ denoting the projection of ${\bf{g}}$  into the null space of  ${\bf{\hat h}}$, ${{\bf{\hat g}}_ \bot } = {{{\bf{g}}_ \bot } \mathord{\left/
 {\vphantom {{{\bf{g}}_ \bot } {\left\| {{\bf{g}}_ \bot ^{}} \right\|}}} \right.
 \kern-\nulldelimiterspace} {\left\| {{\bf{g}}_ \bot ^{}} \right\|}}$ and ${\bf{ g}}$  can be expressed as ${\bf{ g}} = {\alpha _g}{\bf{\hat h}} + {\beta _g}{{\bf{\hat g}}_ \bot }$.

According to (\ref{eq29}), one has

\begin{equation}
{\bf{u}}_{}^* + {{\bf{v}}^*} = \left\{ {\begin{array}{*{20}{l}}
{\sqrt {{P_0}} {\bf{\hat g}},}&{\Gamma  \le \left( {1 - \tau } \right){P_0}{{\left| {{{\bf{h}}^H}{\bf{\hat g}}} \right|}^2}}\\[3mm]
{\left( {\alpha '{\bf{\hat h}} + \beta '{{{\bf{\hat g}}}_ \bot }} \right),}&\begin{array}{l}
\left( {1 - \tau } \right){P_0}{\left| {\bf{h}} \right|^2} \ge \Gamma \\
\Gamma  > \left( {1 - \tau } \right){P_0}{\left| {{{\bf{h}}^H}{\bf{\hat g}}} \right|^2}
\end{array}
\end{array}} \right.,
\end{equation}
Thus, ${{\bf{u}}^ * } + {{\bf{v}}^*}$ can achieve the best performance when $\tau=0$. That is the optimal beamforming vector of the probing signal ${{\bf{v}}^*} = {\bf{0}}$  for Problem (P2.2), which completes the proof. 

\section{Proof of Proposition 3} \label{proof3}
Consider a separable semidefinite program
(SDP) as follows:

\begin{equation}
\begin{array}{*{20}{l}}
{\mathop {\min }\limits_{\left\{ {{{\bf{X}}_l}} \right\}} }&{\sum\limits_{l = 1}^L {{\rm{tr}}\left( {{{\bf{B}}_l}{{\bf{X}}_l}} \right)} }\\[3mm]
{{\rm{s}}{\rm{.t}}{\rm{.}}}&{\sum\limits_{l = 1}^L {{\rm{tr}}\left( {{{\bf{C}}_{ml}}{{\bf{X}}_l}} \right){ \triangleright _m}{b_m},m = 1, \cdots ,M} }\\[1.5mm]
{}&{{{\bf{X}}_l} \succ 0,l = 1, \cdots ,L}
\end{array},    
\end{equation}
where $\left\{ {{{\bf{B}}_l}} \right\}$, $\left\{ {{{\bf{C}}_{ml}}} \right\}$ are all Hermitian matrices (not necessarily positive semidefinite), ${b_m} \in \mathcal{R},\forall m$, ${ \triangleright _m} \in \left\{ { \ge , \le , = } \right\},\forall m$, and $\left\{ {{{\bf{X}}_l}} \right\}$ are all Hermitian matrices. 

Suppose the above SDP is feasible and bounded, where the optimal value is attained. Then, according to \cite{5233822}, it always has an optimal solution $\left\{ {{\bf{X}}_l^*} \right\}$ such that 

\begin{equation}
\sum\limits_{l = 1}^L {{{\left[ {{\rm{rank}}\left( {{\bf{X}}_l^*} \right)} \right]}^2}}  \le M. 
\end{equation}
Based on the above result, it can be proved that there is always a solution to Problem (P3.3)
satisfying that

\begin{equation}
\label{eqb62}
\sum\limits_{k = 1}^K {{{\left[ {{\rm{rank}}\left( {{\bf{U}}_k^*} \right)} \right]}^2}}  \le K + 1.
\end{equation}
Meanwhile, due to the SINR constraints of each CU, one has ${\bf{U}}_k^* \ne 0, \forall k$ and then ${\rm{rank}}\left( {{\bf{U}}_k^*} \right) \ge 1, \forall k$.
Thus, ${\rm{rank}}\left( {{\bf{U}}_k^*} \right) = 1, \forall k$ according to (\ref{eqb62}), which completes the proof.

\section{Proof of Proposition 4} \label{proof4}
According to the rank constraints of the SDP in Appendix \ref{proof3} of \cite{5233822}, it can be proved that there is always a solution to Problem (P3.3)
satisfying that 

\begin{equation}
\label{eqb63}
\sum\limits_{k = 1}^K {{{\left[ {{\rm{rank}}\left( {{\bf{U}}_k^*} \right)} \right]}^2}}  + {\left[ {{\rm{rank}}\left( {{{\bf{V}}^*}} \right)} \right]^2} \le K + 1.
\end{equation}
Meanwhile, due to the SINR constraints of each CU,  ${\bf{U}}_k^* \ne 0, \forall k$, and then ${\rm{rank}}\left( {{\bf{U}}_k^*} \right) \ge 1, \forall k$.
Thus, one has ${\rm{rank}}\left( {{\bf{U}}_k^*} \right) = 1, \forall k$ and ${\rm{rank}}\left( {{{\bf{V}}^*}} \right) \le 1$ according to (\ref{eqb63}).

Next, we prove the necessity of employing the dedicated probing signal. The Lagrangian function of Problem (P4.3) is

\begin{equation}
{{\cal L}_3} = \mu {P_0} - \sum\limits_{k = 1}^K {{\lambda _k}}  + \sum\limits_{k = 1}^K {{\rm{tr}}} \left( {{{\bf{D}}_k}{{\bf{U}}_k}} \right) + {\rm{tr}}\left( {{\bf{EV}}} \right),
\end{equation}
where 

\begin{equation}
{{\bf{D}}_k} = {{\bf{\Phi }}_0} + \frac{{{{\bf{H}}_k}}}{{{\Gamma _k}}} - \sum\limits_{j \ne k} {{\lambda _j}{\rm{tr}}\left( {{{\bf{H}}_j}} \right)}  - \mu {\bf{I}},    
\end{equation}
and

\begin{equation}
{\bf{E}} = {{\bf{\Phi }}_0} - \mu {\bf{I}},  
\end{equation}
with ${\lambda _k} \ge 0,\forall k$ and $\mu  \ge 0$ being the dual variables associated with the SINR constraint of the CU $k$ and the transmit power constraint, respectively. And the dual problem is

\begin{equation}
\begin{array}{*{20}{l}}
{\mathop {\min }\limits_{\left\{ {{\lambda _k}} \right\},\mu } }&{\mu {P_0} - \sum\limits_{k = 1}^K {{\lambda _k}} }\\[3mm]
{{\rm{s}}{\rm{.t}}{\rm{.}}}&{{{\bf{D}}_k} \prec 0,\forall k,{\bf{E}} \prec 0}
\end{array}
\end{equation}

Note that Problem (P4.3) is convex. Thus, strong duality holds and the KKT conditions are necessary and sufficient for any optimal solution to Problem (P4.3). Due to the fact that ${\bf{E}} = {{\bf{\Phi }}_0} - \mu {\bf{I}} \preceq 0$. 
We have that ${\mu ^*} \ge \kappa $, where $\kappa $  is the dominant eigenvalue of ${{\bf{\Phi }}_0}$. 
If  ${\mu ^*} = \kappa $, we have that ${\bf{v}}_{}^* = \sqrt {\tau {P_0}} {\bf{\hat g}}$,
with  $0 < \tau  \le 1$. 
If ${\mu ^*} > \kappa $, we have that  ${\bf{v}}_{}^* = {\bf{0}}$. It completes the proof.

\bibliography{ref.bib}{}
\bibliographystyle{IEEEtran}
\end{document}